\tikzstyle directed=[postaction={decorate,decoration={markings,
mark=at position .65 with {\arrow{latex}}}}]
\newcommand{\dS}{\frac{{\rm d}}{{\rm d}S}}
\newcommand{\ubr}{[U]}
\newcommand{\dSSS}{\frac{{\rm d}^{3}}{{\rm d}S^{3}}}
\newcommand{\dSS}{\frac{{\rm d}^{2}}{{\rm d}S^{2}}}
\newcommand{\acomw}{[W,\cdot]_{+} }
\newcommand{\comw}{[W,\cdot] }
\newcommand{\dinv}{\frac{{\rm d}}{{\rm d}S}^{-1}}
\DeclareMathOperator{\Tr}{Tr}
\DeclareMathOperator{\Res}{Res}
\DeclareMathOperator{\diag}{diag}
\numberwithin{equation}{section}
\newtheorem{Theorem}{Theorem}[section]
\newtheorem{Corollary}[Theorem]{Corollary}
\newtheorem{Lemma}[Theorem]{Lemma}
\newtheorem{Proposition}[Theorem]{Proposition}
 { \theoremstyle{definition}
\newtheorem{Definition}[Theorem]{Definition}
\newtheorem{Example}[Theorem]{Example}
\newtheorem{Remark}[Theorem]{Remark} }
\newtheorem{pb}[Theorem]{Problem}
\begin{document}

\newcommand{\arXivNumber}{2007.05707}

\renewcommand{\PaperNumber}{002}

\FirstPageHeading

\ShortArticleName{A Fully Noncommutative Painlev\'e II Hierarchy: Lax Pair and Solutions}

\ArticleName{A Fully Noncommutative Painlev\'e II Hierarchy: Lax\\ Pair and Solutions Related to Fredholm Determinants}

\Author{Sofia TARRICONE~$^{\dag\ddag}$}

\AuthorNameForHeading{S.~Tarricone}

\Address{$^\dag$~LAREMA, UMR 6093, UNIV Angers, CNRS, SFR Math-Stic, France}
\EmailD{\href{mailto:tarricone@math.univ-angers.fr}{tarricone@math.univ-angers.fr}}

\Address{$^\ddag$~Department of Mathematics and Statistics, Concordia University,\\
\hphantom{$^\ddag$}~1455 de Maisonneuve W., Montr\'eal, Qu\'ebec, Canada, H3G 1M8}

\ArticleDates{Received July 25, 2020, in final form December 31, 2020; Published online January 05, 2021}

\Abstract{We consider Fredholm determinants of matrix Hankel operators associated to matrix versions of the $n$-th Airy functions. Using the theory of integrable operators, we relate them to a fully noncommutative Painlev\'e II hierarchy, defined through a matrix-valued version of the Lenard operators. In particular, the Riemann--Hilbert techniques used to study these integrable operators allows to find a Lax pair for each member of the hierarchy. Finally, the coefficients of the Lax matrices are explicitly written in terms of the matrix-valued Lenard operators and some solutions of the hierarchy are written in terms of Fredholm determinants of the square of the matrix Airy Hankel operators.}

\Keywords{Painlev\'e II hierarchy; Airy Hankel operator; Riemann--Hilbert problem; Lax pairs}

\Classification{34M56; 35Q15; 47B35; 33C10}


\section{Introduction}
The aim of this work is to relate a family of solutions of a noncommutative version of the Painlev\'e II hierarchy to Fredholm determinants of a matrix version of the $ n $-th Airy Hankel operators. The scalar versions of these operators have been recently studied in \cite{le2018multicritical}, in relation with determinantal point processes (we will discuss this relation later on).

 In order to construct our matrix analogue, we first define a matrix-valued version of the $ n $-th Airy function, in the following way
\begin{equation}\label{eq:matrixairy}
\mathbf{Ai}_{2n+1}(x,\vec{s})= \big( c_{j,k} \mathrm{Ai}_{2n+1}(x+s_{j}+s_{k})\big) _{j,k=1}^{r}, \qquad c_{j,k} \in \mathbb{C}, \qquad x \in \mathbb{R} ,
\end{equation}
where $ \mathrm{Ai}_{2n+1}(x+s_{j}+s_{k}) $ is a shift of the $ n $-th scalar Airy function, for some real parameters~$ s_{l}$, $l=1,\dots, r $. We recall that the $ n$-th scalar Airy function, $ \mathrm{Ai}_{2n+1}(x) $, is defined as a particular solution of the differential equation
\begin{equation}\label{eq:genairyeq}
\frac{{\rm d}^{2n}\phi(x)}{{\rm d} x^{2n}}=(-1)^{n+1}x\phi(x)
\end{equation}
for each $n\geq 1$. We refer to~\cite{miura1985particular} for details about the solutions of the generalized Airy equations~\eqref{eq:genairyeq}. In this paper we will consider these functions $ \mathrm{Ai}_{2n+1}(x) $ as contour integrals
\begin{equation*}
\mathrm{Ai}_{2n+1}(x) := \int_{\gamma^{n}_{+}}\frac{1}{2\pi }\exp\left( \frac{{\rm i} \mu^{2n+1}}{2n+1}+{\rm i}x\mu\right){\rm d}\mu ,\qquad x \in \mathbb{R},
\end{equation*} for $\gamma^{n}_{+}$ an appropriate curve, which we will specify later on.

With the matrix-valued Airy functions $ \mathbf{Ai}_{2n+1}(x,\vec{s}) $ defined in~\eqref{eq:matrixairy}, the matrix Airy Hankel operators $\mathcal{A}{\rm i}_{2n+1}$ are defined in the standard way
\begin{equation}\label{eq:conop}
\left( \mathcal{A}{\rm i}_{2n+1}\mathbf{f}\right)(x):=\int_{\mathbb{R}_{+}} \mathbf{Ai}_{2n+1}(x+y,\vec{s})\mathbf{f}(y)\,{\rm d}y ,
\end{equation}
for any $ \mathbf{f}=(f_{1},\dots,f_{r})^{\rm T} \in L^{2}\big(\mathbb{R}_{+},\mathbb{C}^{r} \big) $.
It is actually on the square of this sequence of operators that we focused our study, and in particular on the Fredholm determinants defined as
\begin{equation}\label{eq:freddet}
F^{(n)} (s_{1},\dots, s_{r} ) := \det\big( {\rm Id}_{\mathbb{R}_{+}}- \mathcal{A}{\rm i}_{2n+1}^{2} \big) ,
\end{equation}
that are well defined since the operators $ \mathcal{A}{\rm i}_{2n+1}$ are trace-class on $L^{2} \big( \mathbb{R}_+,\mathbb{C}^r \big) $ (as follows from Proposition~2.1 in~\cite{bertola2012fredholm}).

 The core of this work is devoted to establish a relation between the Fredholm determinants~\eqref{eq:freddet} and some solution of a noncommutative Painlev\'e II hierarchy.
In particular, the results obtained in~\cite{bertola2012fredholm}, where the authors extend the theory of integrable operators of Its--Izergin--Korepin--Slavnov~\cite{its1990differential}, can be directly applied to the matrix operators $ \mathcal{A}{\rm i}_{2n+1} $ defined in~\eqref{eq:conop}. As byproduct, an equality between the Fredholm determinants $ F^{(n)} (s_{1},\dots, s_{r} ) $ and those of certain integrable operators can be established.
The study of these integrable operators involves particular Riemann--Hilbert problems (defined in Riemann--Hilbert Problem~\ref{pb:rhp}), and these are indeed the main tool used in the following. In particular: starting from them we construct a~solution for the isomonodromic Lax pair of the noncommutative Painlev\'e~II hierarchy, that we are going to define as follows.

To start with, we first define a sequence of matrix-valued differential polynomials, by using a noncommutative version of the well known Lenard operators $ \mathcal{L}_{n} $, used to define the scalar Painlev\'e II hierarchy. In the following, $ U$, $W $
are functions depending on all the parameters $ s_{l}$, $l=1,\dots,r $ with values in $ \mathrm{Mat}(r\times r,\mathbb{R}) $. The symbols $ [\, ,\, ]$ and $ [\, ,\, ]_{+} $ indicate respectively the standard commutator and anti-commutator between two matrices, since differential polynomials in~$U$ are noncommutative quantities.

 Then each differential polynomial $ \mathcal{L}_{n} \ubr $ is defined by the following recursive relation
\begin{gather}
\mathcal{L}_{0}\ubr= \frac{1}{2}I_{r}, \nonumber \\
\dS \mathcal{L}_{n}\ubr=\left( \dSSS + [ U,\cdot ] _+ \dS +\dS [ U,\cdot ] _+ + [ U,\cdot ] \dinv [ U,\cdot ] \right) \mathcal{L}_{n-1}\ubr , \qquad n \geq 1,\label{eq:lenardrec}
\end{gather}
where the differential operator $\dS$ is defined as
\begin{equation}\label{eq:ds}
\dS:=\sum_{ k=1}^{r}\frac{\partial}{\partial s_{k}},
\end{equation}
and $\dinv$ in intended as the corresponding formal antiderivative.
The recursive relation for the noncommutative version of the Lenard operators~$\mathcal{L}_{n}$, $n\geq 1 $, is related to the recursion operator for the noncommutative KdV equation, introduced in~\cite{olver1998integrable}. There the authors already conjectured about the locality of these operators computed in~$U$, but the formal proof of that was done some years later in~\cite{olver2000classification} (Theorem~6.2 in this last paper).

Finally we define our noncommutative Painlev\'e II hierarchy as follows
\begin{equation}\label{eq:Piiintr}
\mathrm{PII^{(n)}_{\rm NC}}\colon \ \left( \dS + \acomw \right) \mathcal{L}_{n}\ubr =(-1)^{n+1}4^{n} [S, W ]_{+} ,
\end{equation}
where $U \coloneqq\dS W-W^{2}$ is the Miura transform of $W$ and the variable $ S $ is the diagonal matrix $ S:= \diag(s_{1}, \dots, s_{r})$ so that the anti-commutator in the right hand side is needed (also note that $ \dS S=I_{r} $). For this reason we refer to our hierarchy as a fully noncommutative one, since in its definition~\eqref{eq:Piiintr} also the independent variable~$S$ is noncommutative. A matrix Painlev\'e~II hierarchy, constructed by using a noncommutative version of Lenard operators as in~\eqref{eq:lenardrec}, was recently studied in~\cite{gordoa2016matrix} but in this paper the independent variable is a scalar.

In this work, first of all, we found out that the hierarchy~\eqref{eq:Piiintr} admits an isomonodromic Lax pair with Lax matrices that are block-matrices of dimension~$ 2r $. Furthermore, they are explicitly written in terms of the matrix-valued Lenard operators defined in \eqref{eq:lenardrec}. The result proved in Section~\ref{sec:IV} is summarized in the following proposition.
\begin{Proposition}\label{Proposition:laxpairintro}
For each fixed $ n $ there exist two polynomial matrices in $ \lambda $, namely $ L^{(n)}$, $M^{(n)} $, respectively of degree $ 1 $ and $ 2n $, such that the following system
\begin{gather}
\dS\Psi ^{(n)}(\lambda, \vec{s})= L^{(n)}(\lambda, \vec{s})\Psi ^{(n)} ( \lambda, \vec{s} ), \nonumber\\
\frac{\partial}{\partial {\lambda}}\Psi ^{(n)}(\lambda, \vec{s})=M^{(n)}(\lambda, \vec{s})\Psi ^{(n)} ( \lambda, \vec{s} ) \label{eq:laxpairRHintro}
\end{gather}
is an isomonodromic Lax pair for the $ n$-th equation of the matrix Painlev\'e II hierarchy \eqref{eq:Piiintr}.

In particular the matrices $ L^{(n)}$, $M^{(n)} $ have the following forms
\begin{equation*}
L^{(n)}(\lambda, \vec{s})= \begin{pmatrix}
{\rm i}\lambda I_{r}&W(\vec{s})\\
W(\vec{s})& - {\rm i}\lambda I_{r}
\end{pmatrix},
\end{equation*}
and
\begin{equation*}
M^{(n)}(\lambda, \vec{s})=\begin{pmatrix}
A(\lambda,\vec{s})+{\rm i}S&{\rm i}G(\lambda,\vec{s})\\
-{\rm i}G(\lambda,\vec{s})&- A(\lambda,\vec{s})-{\rm i}S
\end{pmatrix}+\begin{pmatrix}
E(\lambda,\vec{s})&F(\lambda,\vec{s})\\
F(\lambda,\vec{s})&E(\lambda,\vec{s})
\end{pmatrix},
\end{equation*}
where
\begin{gather*}
A(\lambda,\vec{s})=\sum_{ k=0}^{n}\frac{{\rm i}}{2} \lambda ^{2n-2k}A_{2n-2k} (\vec{s} ), \qquad \text{with} \quad A_{2n}=I_{r},\\
G(\lambda,\vec{s})=\sum_{k=1}^{n}\frac{{\rm i}}{2}\lambda ^{2n-2k}G_{2n-2k} ( \vec{s} ), \\
E(\lambda,\vec{s})=\sum_{k=1}^{n} \frac{{\rm i}}{2}\lambda ^{2n-2k+1}E_{2n-2k+1} ( \vec{s} ) , \\
F(\lambda,\vec{s})=\sum_{k=1}^{n}\frac{{\rm i}}{2}\lambda ^{2n-2k+1}F_{2n-2k+1} ( \vec{s} ) .
\end{gather*}
All the coefficients $ A_{2n-2k}$, $G_{2n-2k}$, $E_{2n-2k+1}$, $F_{2n-2k+1} $ are expressed in terms of the Lenard operators through the formulae~\eqref{eq:coeffM}.
\end{Proposition}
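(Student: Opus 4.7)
My plan is to construct the Lax pair directly from the solution $\Gamma(\lambda,\vec{s})$ of the Riemann--Hilbert problem~\ref{pb:rhp}, normalised so that $\Gamma\to I_{2r}$ as $\lambda\to\infty$. By the general framework of \cite{its1990differential,bertola2012fredholm} the jumps of $\Gamma$ have the form ${\rm e}^{-{\rm i}\theta_n(\lambda,S)\widehat{\sigma}_3}J_0\,{\rm e}^{{\rm i}\theta_n(\lambda,S)\widehat{\sigma}_3}$, where $\theta_n(\lambda,S):=\tfrac{\lambda^{2n+1}}{2n+1}I_r+\lambda S$, $\widehat{\sigma}_3:=\diag(I_r,-I_r)$ in the block-$2r$ sense, and $J_0$ is piecewise-constant on the contour. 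The dressed function
\begin{equation*}
\Psi^{(n)}(\lambda,\vec{s}):=\Gamma(\lambda,\vec{s})\,{\rm e}^{{\rm i}\theta_n(\lambda,S)\widehat{\sigma}_3}
\end{equation*}
therefore has jumps independent of both $\lambda$ and $\vec{s}$, whence the logarithmic derivatives $L^{(n)}:=\dS\Psi^{(n)}\cdot(\Psi^{(n)})^{-1}$ and $M^{(n)}:=\partial_\lambda\Psi^{(n)}\cdot(\Psi^{(n)})^{-1}$ have no jumps and are entire in $\lambda$. The large-$\lambda$ asymptotics of $\Gamma$ then forces them to be polynomial in $\lambda$ of degrees $1$ and $2n$ respectively.

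I would next extract the explicit block structure from the asymptotic expansion $\Gamma=I_{2r}+\sum_{k\geq 1}\Gamma_k(\vec{s})\lambda^{-k}$. Identifying the off-diagonal block of $\Gamma_1$ with $W(\vec{s})$ gives $L^{(n)}$ the claimed form; the symmetry of its off-diagonal blocks reflects a $\mathbb{Z}_2$-involution of the RHP that swaps the two $r\times r$ blocks, inherited from working with the \emph{square} of the Airy operator. Writing
\begin{equation*}
M^{(n)}=(\partial_\lambda\Gamma)\Gamma^{-1}+{\rm i}\,\Gamma\bigl(\lambda^{2n}I_r+S\bigr)\widehat{\sigma}_3\Gamma^{-1}
\end{equation*}
and decomposing according to this involution (even versus odd powers of $\lambda$) produces the stated splitting of $M^{(n)}$ into a $\widehat{\sigma}_3$-like piece carrying $A, G$ and a swap-symmetric piece carrying $E, F$.

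The main technical step, and the principal obstacle, is to identify the coefficients $A_{2n-2k}$, $G_{2n-2k}$, $E_{2n-2k+1}$, $F_{2n-2k+1}$ with the noncommutative Lenard polynomials $\mathcal{L}_k\ubr$ evaluated at the Miura variable $U:=\dS W-W^{2}$. This is enforced by polynomiality of $M^{(n)}$: when the formal Laurent series for $\Gamma$ is substituted, all negative powers of $\lambda$ must cancel. Solving these constraints order by order yields a triangular recursion on the $\Gamma_k$ which, after careful noncommutative rearrangement, matches the Lenard recursion~\eqref{eq:lenardrec}: the formal antiderivative $\dinv$ appears from integrating total $\dS$-derivatives at each step, while $\acomw$ and $\comw$ emerge from the block algebra with $\widehat{\sigma}_3$. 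Verifying this verbatim match of noncommutative recursions is the delicate point.

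Finally, the isomonodromic compatibility $\partial_\lambda L^{(n)}-\dS M^{(n)}+[L^{(n)},M^{(n)}]=0$ holds automatically since both matrices are logarithmic derivatives of the same $\Psi^{(n)}$. Extracting its $\lambda^0$ coefficient and substituting the Lenard identifications from the previous step produces
\begin{equation*}
(\dS+\acomw)\mathcal{L}_n\ubr=(-1)^{n+1}4^n[S,W]_+ ,
\end{equation*}
which is exactly the $n$-th equation~\eqref{eq:Piiintr} of the hierarchy; the higher powers of $\lambda$ in the compatibility cancel automatically as a consequence of the Lenard recursion itself.
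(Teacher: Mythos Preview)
Your overall framework matches the paper exactly: define $\Psi^{(n)}=\Xi^{(n)}{\rm e}^{\theta^{(n)}\otimes\sigma_3}$, observe that its jumps are constant so that $L^{(n)},M^{(n)}$ are entire hence polynomial of the right degrees, and use the $\hat\sigma_1$-symmetry of the RHP to obtain the block/parity structure of the coefficients. Up to this point your outline and the paper's proof coincide.

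The divergence is in the key technical step, the identification of the coefficients $A_{2n-2k},G_{2n-2k},E_{2n-2k+1},F_{2n-2k+1}$ with the Lenard expressions~\eqref{eq:coeffM}. You propose to obtain this from ``polynomiality of $M^{(n)}$: when the formal Laurent series for $\Gamma$ is substituted, all negative powers of $\lambda$ must cancel''. This does not suffice. The polynomial part of $M^{(n)}$ is $\big[\Xi^{(n)}\big(\tfrac{{\rm i}}{2}\lambda^{2n}I_r+{\rm i}S\big)\hat\sigma_3\,\Theta^{(n)}\big]_{\ge 0}$, since $(\partial_\lambda\Xi^{(n)})\Theta^{(n)}=O(\lambda^{-2})$; the cancellation of negative powers therefore gives only \emph{algebraic} relations among the $\Xi_k^{(n)},\Theta_k^{(n)}$, with no $\dS$-derivatives, whereas the formulae~\eqref{eq:coeffM} are differential polynomials in $W$. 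The differential information has to come from the $\dS$-equation, i.e.\ from the \emph{full} compatibility condition, not just its $\lambda^0$ term.

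This is precisely what the paper does: it writes the zero-curvature equation $\partial_\lambda L^{(n)}-\dS M^{(n)}+[L^{(n)},M^{(n)}]=0$ out as the coupled system~\eqref{eq:difeqcoeff} for the coefficients, and then proves by induction on the index that the unique solution (with integration constants fixed by the large-$\lambda$ asymptotics of $\Psi^{(n)}$) is given by~\eqref{eq:coeffM}. The ingredient that makes the induction close, and which is absent from your sketch, is the factorisation~\eqref{eq:factlennc} of the Lenard recursion under the Miura substitution $U=\dS W-W^2$,
\[
\dS\mathcal{L}_{k+1}\ubr=\Big(\dS-\acomw\Big)\Big(\dS-\comw\dinv\comw\Big)\Big(\dS+\acomw\Big)\mathcal{L}_k\ubr,
\]
which is invoked twice per inductive step (once to integrate $A_{2n-2k}$, once to produce $F_{2n-2k-1}$). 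Only after this induction is complete does the bottom equation of~\eqref{eq:difeqcoeff} yield the $n$-th hierarchy member~\eqref{eq:Piiintr}; your statement that ``the higher powers of $\lambda$ in the compatibility cancel automatically as a consequence of the Lenard recursion'' has the logic reversed---those higher-order equations are what \emph{establish} the Lenard identification, not a consequence of it.
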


This result can be thought as the noncommutative analogue of the well known isomonodromic Lax pair for the scalar Painlev\'e~II hierarchy studied in \cite{clarkson2006lax}, and resulting from a self-similarity reduction of the Lax pair for the modified KdV hierarchy.

Finally, we construct a solution $ \Psi^{(n)} $ for the Lax pair~\eqref{eq:laxpairRHintro}, by using the solution of the Riemann--Hilbert Problem~\ref{pb:rhp} involved in the study of the integrable operators associated to the matrix operators squared $ \mathcal{A}{\rm i}_{2n+1}^{2} $.

As byproduct, we obtain the following relation between some solutions of the hierarchy~\eqref{eq:Piiintr} and the Fredholm determinants \eqref{eq:freddet}. This is indeed the final result of this work and it is proved at the end of Section \ref{sec:IV}.
\begin{Corollary}\label{Corollary:sols}
There exists a solution $ W $ of the $ n$-th member of the matrix PII hierarchy~\eqref{eq:Piiintr}, that is connected to the Fredholm determinant of the $ n$-th Airy matrix Hankel operator through the following formula
\begin{equation*}
-\Tr\big( W^{2} (\vec{s})\big) = \dSS \ln \big( F^{(n)} (s_{1},\dots, s_{r} ) \big) .
\end{equation*}
Defining $\mathrm{s} \coloneqq \frac{1}{r}\sum_{j=1}^{r}s_j$, and $\delta_j\coloneqq s_j - \mathrm{s}$, this solution $ W $ in the regime $\mathrm{s}\rightarrow + \infty$ with $ | \delta_j | \leq m $ for every~$ j $, has asymptotic behavior $( W) _{k,l=1}^r \sim -2 ( c_{kl}\mathrm{Ai}_{2n+1} (s_k + s_l) )_{k,l=1}^r $.
\end{Corollary}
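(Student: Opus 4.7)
The plan is to leverage the Riemann--Hilbert machinery already assembled in the paper. By the matrix extension of the Its--Izergin--Korepin--Slavnov theory in \cite{bertola2012fredholm}, the Fredholm determinant $F^{(n)}(s_1,\dots,s_r)=\det\bigl(\mathrm{Id}_{\mathbb{R}_+}-\mathcal{A}{\rm i}_{2n+1}^{2}\bigr)$ equals the determinant of an associated matrix integrable operator, whose resolvent is encoded in the solution $\Gamma(\lambda,\vec{s})$ of Riemann--Hilbert Problem~\ref{pb:rhp}. The construction of $\Psi^{(n)}$ from $\Gamma$ carried out in Section~\ref{sec:IV} yields a solution of the Lax system \eqref{eq:laxpairRHintro} whose subleading coefficient at $\lambda=\infty$ is, by design, the matrix $W(\vec{s})$ appearing in the off-diagonal block of $L^{(n)}$. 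Compatibility of this Lax pair then forces $W$ to solve $\mathrm{PII^{(n)}_{\rm NC}}$, so existence of the claimed solution is essentially free.

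For the identity $-\Tr\bigl(W^{2}\bigr)=\dSS \ln F^{(n)}$, the plan is to compute $\dS \ln F^{(n)}$ via the matrix Jimbo--Miwa--Ueno type formula valid for integrable operators (\cite{bertola2012fredholm}), which expresses this derivative as a contour integral of a bilinear functional in $\Gamma$ and the jump matrices. Evaluating by residues at $\lambda=\infty$, using the large-$\lambda$ expansion $\Gamma=I+\Gamma_{1}/\lambda+\Gamma_{2}/\lambda^{2}+\cdots$, one reduces the first derivative to a scalar quantity built from the first few $\Gamma_{k}$. Applying $\dS$ once more and invoking the Lax equation $\dS\Psi^{(n)}=L^{(n)}\Psi^{(n)}$ with the explicit block form of $L^{(n)}$ given in Proposition~\ref{Proposition:laxpairintro}, the right-hand side collapses to $-\Tr\bigl(W^{2}\bigr)$ by direct computation and cyclicity of the trace. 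This is the noncommutative analogue of the classical Tracy--Widom derivation, with the anti-commutator structure of $\mathrm{PII^{(n)}_{\rm NC}}$ absorbing the non-commutativity of $W$ and $S$.

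For the asymptotic behavior as $\mathrm{s}\to+\infty$ with $|\delta_{j}|\leq m$, I would use the rapid decay of $\mathrm{Ai}_{2n+1}(x)$ at $x\to+\infty$ (\cite{miura1985particular}) to conclude that the operator $\mathcal{A}{\rm i}_{2n+1}$ on $\mathbb{R}_+$ has exponentially small $L^{2}$ operator norm, placing the Riemann--Hilbert Problem~\ref{pb:rhp} in the small-norm regime. A single Neumann iteration of the singular integral equation for $\Gamma$ then gives $\Gamma_{1}$ explicitly, and its matrix entries reproduce $\bigl(c_{kl}\mathrm{Ai}_{2n+1}(s_{k}+s_{l})\bigr)_{k,l}$ up to the factor $-2$ that arises from the jump normalization together with the reduction from the squared operator $\mathcal{A}{\rm i}_{2n+1}^{2}$ to the integrable structure underlying $\Gamma$. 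Reading $W$ off from the leading correction then yields the stated asymptotic.

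The main obstacle, in my view, is the bookkeeping of the second step: one must show that the residue-at-infinity formula for $\dSS \ln F^{(n)}$ collapses exactly to $-\Tr\bigl(W^{2}\bigr)$, with all contributions from the higher Lenard coefficients $A_{2k}$, $G_{2k}$, $E_{2k+1}$, $F_{2k+1}$ cancelling out. In the noncommutative setting this relies on cyclicity of the trace together with the particular block symmetry of $L^{(n)}$ and $M^{(n)}$ displayed in Proposition~\ref{Proposition:laxpairintro}, and on the fact that the formal antiderivative $\dinv$ appearing in the Lenard recursion \eqref{eq:lenardrec} is compatible with the residue calculus; by comparison, the small-norm asymptotic analysis and the extraction of a PII solution from $\Psi^{(n)}$ are essentially mechanical once the RHP framework is in place.
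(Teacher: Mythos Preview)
Your overall strategy matches the paper's: existence of $W$ comes from the Lax pair construction of Section~\ref{sec:IV}, the trace identity comes from the residue-at-infinity computation of Theorem~\ref{Proposition:ascoeffredolm} combined with one differential relation among the asymptotic coefficients, and the large-$\mathrm{s}$ behavior comes from a small-norm analysis of the Riemann--Hilbert problem.

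However, you have misidentified the main obstacle. The quantity $\dSS\ln F^{(n)}$ does \emph{not} involve any of the higher Lenard data $A_{2k},G_{2k},E_{2k+1},F_{2k+1}$; those sit in $M^{(n)}$, which plays no role here. The paper uses only $L^{(n)}$. Theorem~\ref{Proposition:ascoeffredolm} already gives $\dS\ln F^{(n)}=-2{\rm i}\,\Tr(\alpha_1^{(n)})$, so it remains to compute $\dS\alpha_1^{(n)}$. Since $L^{(n)}=\dS\Psi^{(n)}(\Psi^{(n)})^{-1}$ is a polynomial of degree~$1$ in $\lambda$, the coefficient of $\lambda^{-1}$ in the formal expansion of $\dS\Psi^{(n)}(\Psi^{(n)})^{-1}$ must vanish; writing this out with $\Theta_1^{(n)}=-\Xi_1^{(n)}$ and $\Theta_2^{(n)}=(\Xi_1^{(n)})^2-\Xi_2^{(n)}$ gives directly $\dS\alpha_1^{(n)}=-2{\rm i}(\beta_1^{(n)})^2$. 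Since $W=2\beta_1^{(n)}$, one line of arithmetic yields $-\Tr(W^2)=\dSS\ln F^{(n)}$. There is no cancellation to orchestrate and no appeal to $\dinv$ or to the Lenard recursion.

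For the asymptotics, your plan via the operator norm of $\mathcal{A}{\rm i}_{2n+1}$ would work, but the paper takes a slightly different, more Riemann--Hilbert-native route: it rescales $\lambda=z\,\mathrm{s}^{1/2n}$, factorizes the jump as a product of commuting elementary pieces $F_{kl}^{\pm}$, locates their saddle points, and bounds $\|F_{kl}^{\pm}-I_{2r}\|_\infty$ directly on the deformed contours. The small-norm theorem then controls $X^{(n)}(z)=\Xi^{(n)}(z\,\mathrm{s}^{1/2n})$, and the integral representation of $X_1^{(n)}$ yields $W\sim -2(c_{kl}\mathrm{Ai}_{2n+1}(s_k+s_l))$. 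Your Neumann-iteration viewpoint and the paper's jump-matrix estimate are essentially two sides of the same coin; the paper's version has the advantage of making the exponential rate $e^{-C\mathrm{s}^{(2n+1)/2n}}$ explicit and uniform in the $\delta_j$.
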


We remark that
in \cite{bertola2012fredholm} the above result was actually proved for the first equation of the hierarchy, i.e., for the case $n=1$. In this paper we extend the result to the all hierarchy \eqref{eq:Piiintr}, i.e., to every $n \in \mathbb{N}$.
This result is a generalization of well known results in the scalar case. Indeed, in this case the $ n $-th Airy kernels, defined as
\begin{equation}\label{eq:airykern}
K_{\mathrm{Ai}_{2n+1}}(x,y):= \int_{\mathbb{R}_{+}}\mathrm{Ai}_{2n+1}(x+z)\mathrm{Ai}_{2n+1}(z+y){\rm d}z,
\end{equation}
are known to be related to Painlev\'e trascendents. For $ n=1 $, it was shown in \cite{tracy1994level} that the Airy kernel \eqref{eq:airykern} is related to the Hastings--McLeod solution of the Painlev\'e II equation (discussed for example in \cite{clarkson1988connection,deift1995asymptotics,hastings1980boundary}). This study turns out to be very interesting since it creates a connection between integrable systems and determinantal point processes. Indeed, the Airy kernel defines the so called Airy point process, that arises in many areas of mathematics, such as statistical mechanics and random matrix theory (here some examples of literature \cite{johansson2000shape, tracy1994level,tracy2002distribution}).

For Airy kernels \eqref{eq:airykern} with $ n> 1 $, a generalization of this kind of results has been recently studied in \cite{le2018multicritical}, and in \cite{cafasso2019fredholm}.

In this work, we see that the matrix Airy Hankel operators squared $\mathcal{A}{\rm i}_{2n+1} ^{2}$ can actually be interpreted as kernels for determinantal point processes on the space of configuration $ \lbrace 1,\dots,r \rbrace \times \mathbb{R} $ (under certain assumptions on the matrix $ C=(c_{j,k})_{j,k=1}^{r} $), and it would be interesting to study whether they describe phenomena in random matrix theory or statistical mechanics.

Here is a more precise list of what it is done in this work.
\begin{itemize}\itemsep=0pt
\item In Section \ref{sec:II} the general theory developed in \cite{bertola2012fredholm} is applied to the operators $\mathcal{A}{\rm i}_{2n+1}^{2} $, in order to associate the Fredholm determinants \eqref{eq:freddet} to the ones of certain integrable operators. The most important consequence of this study is indeed Theorem \ref{Proposition:ascoeffredolm}, that establishes a~relation between Fredholm determinant~\eqref{eq:freddet} and the solution of Riemann--Hilbert Problem~\ref{pb:rhp}. Furthermore, in this section it is provided in which hypothesis the solution exists (Theorem~\eqref{Theorem:existence solution rhp}), and so the relation for the Fredholm determinants found in Theorem~\ref{Proposition:ascoeffredolm} holds.
\item In Section \ref{sec:III} the fully noncommutative Painlev\'e~II hierarchy is introduced and the first equations are explicitly written.
\item In the first part of Section~\ref{sec:IV}, the proof of Proposition~\ref{Proposition:laxpairintro} is given and the construction of the solution $ \Psi^{(n)} $ of the isomonodromic Lax pair~\eqref{eq:laxpairRHintro} for the hierarchy~\eqref{eq:Piiintr} is implemented. Finally in the end of Section~\ref{sec:IV}, Corollary~\ref{Corollary:sols} is proved, by using Theorem~\ref{Proposition:ascoeffredolm} and the properties of the solution~$ \Psi^{(n)} $ of the isomonodromic Lax pair~\eqref{eq:laxpairRHintro}.
\end{itemize}

\section[Riemann Hilbert problems associated to the matrix Airy operators]{Riemann Hilbert problems associated\\ to the matrix Airy operators}\label{sec:II}

In this section we are going to study the Fredholm determinants of some matrix-valued Airy Hankel operator. By using the theory developed in \cite{bertola2012fredholm} we can associate to this sequence of matrix Airy operators a sequence of integrable operators with certain kernels, such that their Fredholm determinants are equal.
Properties of this kind of integral kernels are studied through Riemann--Hilbert problems. As byproduct, this procedure allows to find the fundamental relation between Fredholm determinants of the Airy matrix Hankel operators and the first asymptotic coefficient of the solutions of these Riemann--Hilbert problems, as proved in Theorem \ref{Proposition:ascoeffredolm}.

To start with, we recall some basic fact about the scalar generalized Airy functions $ \mathrm{Ai}_{2n+1} $. For each $ n \in \mathbb{N} $, we consider these functions $ \mathrm{Ai}_{2n+1} $ as the contour integrals
\begin{equation}\label{eq:int representation airy}
\mathrm{Ai}_{2n+1}(x) := \int_{\gamma^{n}_{+}}\frac{1}{2\pi }\exp\left( \frac{{\rm i} \mu^{2n+1}}{2n+1}+{\rm i}x\mu\right){\rm d}\mu ,\qquad x \in \mathbb{R},
\end{equation}
where $ \gamma^{n}_{\pm}$ are curves in the upper (lower) complex plane with asymptotic rays at $\pm \infty $ that are $ \phi_{\pm}^{n}:= \frac{\pi}{2} \pm \frac{\pi n}{2n + 1} $, and such that $ \gamma^{n}_{-}=- \gamma^{n}_{+} $. An example of these curves for $ n=1 $ is given in Fig.~\ref{fig:curves3}.
\begin{Definition}
The $ n $-th matrix-valued Airy function is defined as
\begin{equation*}
\mathbf{Ai}_{2n+1}(x,\vec{s}):= \big( c_{j,k} \mathrm{Ai}_{2n+1}(x+s_{j}+s_{k})\big) _{j,k=1}^{r}, \qquad x \in \mathbb{R}.
\end{equation*}
Here $ C= ( c_{j,k}) _{j,k=1}^{r} \in \mathrm{Mat}( {r}\times r,\mathbb{C}) $ and the parameters $s_l \in \mathbb{R}$, $l=1,\dots,r$.
\end{Definition}

With these functions we construct the matrix-valued operators we are going to study in the following.
\begin{Definition}
We consider $\lbrace \mathcal{A}{\rm i}_{2n+1}\rbrace_{n \in \mathbb{N}} $ the sequence of matrix Hankel operators acting on any $ \mathbf{f}=(f_{1},\dots,f_{r})^{\rm T} \in L^{2} \big(\mathbb{R}_{+},\mathbb{C}^{r} \big) $ s.t.
\begin{equation}\label{eq:convop}
 ( \mathcal{A}{\rm i}_{2n+1}\mathbf{f} )(x):=\int_{\mathbb{R}_{+}} \mathbf{Ai}_{2n+1}(x+y,\vec{s})\mathbf{f}(y)\,{\rm d}y .
\end{equation}
\end{Definition}

Component wise the $ n$-th Hankel operator $\mathcal{A}{\rm i}_{2n+1}, $ looks like
\begin{equation}\label{eq:comp conv opera}
\left( \mathcal{A}{\rm i}_{2n+1}\mathbf{f}\right)_{j}(x)=\sum_{ k=1}^{r}c_{j,k} \int_{\mathbb{R}_{+}} \mathrm{Ai}_{2n+1}(x+y+s_{j}+s_{k})f_{k}(y)\,{\rm d}y, \qquad j=1,\dots,r.
\end{equation}

\begin{Remark}One can equivalently define the matrix-valued generalized Airy functions as contours integrals, in the following way.
For each $ n\in \mathbb{N} $
\begin{itemize}\itemsep=0pt
\item we take $ s_{1}, \dots, s_{r} $ real parameters and $ S:=\diag(s_{1}, \dots, s_{r}) $ and we define the matrix-valued complex function
\begin{equation}\label{eq:theta}
\theta_{2n+1}(\mu,\vec{s}):=\frac{{\rm i} \mu^{2n+1}}{2(2n+1)}I_{r}+ {\rm i}\mu S,
\end{equation}
where $ I_{r} $ is the identity matrix of dimension $ r $.
\item Then, we take the matrix $ C = ( c_{j,k} ) _{j,k=1}^{r} \in \mathrm{Mat} ( {r}\times r,\mathbb{C} ) $ we define the matrix-valued function
\begin{equation}\label{eq:funcrjump}
r^{(n)} (\lambda,\mu,\vec{s}):= \frac{1}{2\pi {\rm i} }\exp (\theta_{2n+1}(\lambda,\vec{s})) C \exp(\theta_{2n+1}(\mu,\vec{s})).
\end{equation}
\item Finally, we can define the generalized matrix Airy function as
\begin{equation*}
\mathbf{Ai}_{2n+1}(x,\vec{s})=\int_{\gamma^{n}_{+}}{\rm i} r^{(n)} (\mu,\mu,\vec{s}) \exp({\rm i}x\mu)\,{\rm d}\mu,
\end{equation*}
where the integral is computed entry by entry.
\end{itemize}
\end{Remark}

\begin{figure}[t]\centering
\begin{tikzpicture}[scale=5,
axis/.style={very thick, ->, >=stealth'},
important line/.style={thick},
dashed line/.style={dashed, thin},
pile/.style={thick, ->, >=stealth', shorten <=2pt, shorten
>=2pt},
every node/.style={color=black}]

\draw[axis] (-1,0) -- (1,0);
\draw[axis] (0,-0.5) -- (0,0.5);
\draw[>=latex,directed,blue] (-0.866, 0.52)..controls (0,0.000000001).. (0.866,0.52) node[right,blue]{$\gamma^{3}_{+}$};
\draw[>=latex] (0, 0) to (0.866, 0.5);
\draw[>=latex] (0, 0) to (-0.866, -0.5);
\draw[>=latex] (0, 0) to (-0.866, 0.5);
\draw[>=latex] (0, 0) to (0.866, -0.5);
\draw[>=latex,directed,blue] (0.866, -0.52)..controls (0,-0.000000001).. (-0.866,-0.52) node[left,blue]{$\gamma^{3}_{-}$};
\coordinate (a) at (0,0);
\coordinate (b) at (1.732,1);
\coordinate (c) at (1.732,0);
\pic[draw, ->, "$\phi_{+}^3$", angle eccentricity=2] {angle = c--a--b};
\end{tikzpicture}
\caption{These are the contours $ \gamma_{\pm}^{3} $ for the integral representation \eqref{eq:int representation airy} of the Airy function $\mathrm{Ai}_{3}$ (case $n=1$). Their asymptotics at $\pm \infty $ are $ \phi_{\pm}^{3}:= \frac{\pi}{6},\frac{5\pi}{6}$.}
\label{fig:curves3}
\end{figure}
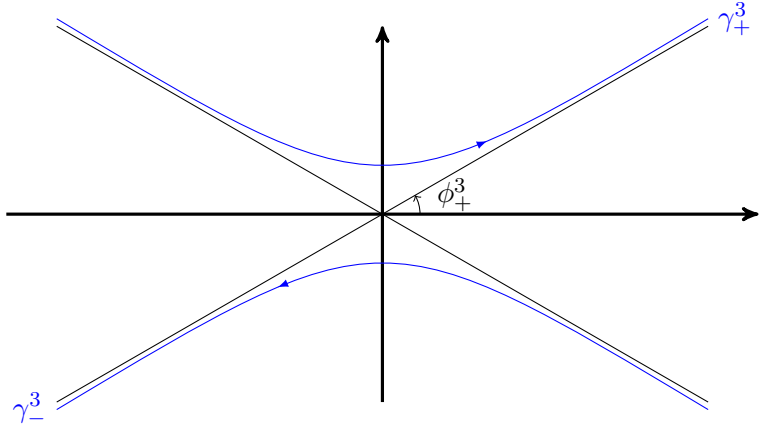

We are actually interested in the square of the matrix Airy operators defined above in~\eqref{eq:convop}. Indeed, the Fredholm determinants of these squared operators will be related to the noncommutative Painlev\'e~II hierarchy~\eqref{eq:Piiintr}.

We are now going to define a sequence of Riemann--Hilbert problems related to the matrix Airy operators. These are indeed the building blocks necessary to find the relation between Fredholm determinants of the matrix Airy operators and our noncommutative Painlev\'e~II hierarchy.

\begin{Remark}From now on, in order to simplify the notation, the dependence on $\vec{s}$ in the quantities \eqref{eq:theta}, \eqref{eq:funcrjump} will be omitted and we will use the abbreviation $r^{(n)} (\lambda,\lambda,\vec{s})=r^{(n)} (\lambda)$.
\end{Remark}

\begin{pb}\label{pb:rhp}
Find a $ (\lambda-) $analytic matrix-valued function \[\Xi^{(n)}(\lambda) \colon \ \mathbb{C}\setminus \big( \gamma^{n}_{+} \cup \gamma^{n}_{-}\big)\rightarrow {\rm GL}(2r, \mathbb{C}) , \]
admitting continuous extension to the contour $ \gamma^{n}_{+} \cup \gamma^{n}_{-}$ from either side and such that it satisfies the following two conditions:
\begin{itemize}\itemsep=0pt
\item the jump condition for each $ \lambda \in \gamma^{n}_{+} \cup \gamma^{n}_{-} $
\begin{equation}\label{eq:jump}
\Xi^{(n)}_{+}(\lambda)=\Xi_{-} ^{(n)}(\lambda)
\underbrace{\begin{pmatrix}
I_{r} & -2\pi {\rm i} r^{(n)} (\lambda)\chi_{\gamma^{n}_{+}}(\lambda)\\
-2\pi {\rm i} r^{(n)} (-\lambda)\chi_{\gamma^{n}_{-}}(\lambda)& I_{r}
\end{pmatrix}}_{:=J^{(n)}(\lambda,\vec{s})}
,
\end{equation}
where we denote by $ \Xi^{(n)}_{\pm} $ the boundary values of $ \Xi^{(n)} $ for $ \lambda \in \gamma_{+}^{(n)} \cup \gamma_{-}^{(n)} $, approaching the boundary from the left $ (+) $ and the right $ (-) $ nontangentially.
\item the asymptotic condition for $|\lambda| \rightarrow \infty $
\begin{equation}\label{eq:asymp}
\Xi^{(n)}(\lambda) \sim I_{2r } + \sum_{j\geq 1 } \frac{\Xi^{(n)}_{j}}{\lambda^{j}}.
\end{equation}
\end{itemize}
\end{pb}

\begin{Remark}In the following we are going to use the Pauli's tensorized matrices, that have the same property as the ones in the usual Clifford algebra. In particular we denote the tensorized matrices by
\begin{equation*}
\hat{\sigma}_{1}=\sigma_{1} \otimes I_{2r},\qquad
\hat{\sigma}_{2}=\sigma_{2} \otimes I_{2r}, \qquad \hat{\sigma}_{3}=\sigma_{3} \otimes I_{2r},
\end{equation*}
where
\begin{equation*}
\sigma_{1}=\begin{pmatrix}
0&1\\1&0
\end{pmatrix}, \qquad \sigma_{2}=\begin{pmatrix}
0&{\rm i}\\
-{\rm i}&0
\end{pmatrix}, \qquad \sigma_{3}=\begin{pmatrix}
1&0\\
0&-1
\end{pmatrix}.
\end{equation*}
Then the standard relations hold also in this case:
\begin{equation*}
[ \hat{\sigma}_{1},\hat{\sigma}_{2}] =-2{\rm i}\hat{\sigma}_{3},
\qquad
[ \hat{\sigma}_{1},\hat{\sigma}_{3}] =2{\rm i}\hat{\sigma}_{2},
\qquad
[ \hat{\sigma}_{2},\hat{\sigma}_{3}] =-2{\rm i}\hat{\sigma}_{1},
\qquad \hat{\sigma}_{i}^{2}=I_{2r}, \quad \forall\, i.
\end{equation*}
\end{Remark}

The following symmetry property will be useful in the next computations.
\begin{Corollary}
The asymptotic coefficients appearing in equation \eqref{eq:asymp} have the following form
\begin{gather}
\Xi^{(n)}_{2j}=\alpha_{2j}^{(n)} \otimes I_{2}+\beta_{2j}^{(n)}\otimes\sigma_{1},\nonumber\\
\Xi^{(n)}_{2j-1}=\alpha_{2j-1}^{(n)} \otimes \sigma_{3}+\beta_{2j-1}^{(n)}\otimes\sigma_{2},
\qquad j\geq 1.\label{eq:symcoef}
\end{gather}
Here $\alpha_{l}^{(n)}$, $\beta_{l}^{(n)}$ for every $l\geq 1$ correspond to the $r\times r$ matrices in the entries $(1,1)$ and $(1,2)$ of the block matrix~$\Xi_l^{(n)}$.

An analogue statement is true for the asymptotic coefficients of the inverse of the solution of the Riemann--Hilbert Problem~{\rm \ref{pb:rhp}}, namely $ \Theta^{(n)}:= \big( \Xi^{(n)}\big)^{-1} $.
\end{Corollary}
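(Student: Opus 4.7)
The plan is to exploit the $\mathbb{Z}_2$ symmetry of Riemann--Hilbert Problem~\ref{pb:rhp} under the reflection $\lambda\mapsto -\lambda$. Let $\sigma_\ast := \sigma_1\otimes I_r \in \mathrm{GL}(2r,\mathbb{C})$ be the block-swap matrix. The first step is to verify the jump-matrix identity
\[
J^{(n)}(-\lambda) = \sigma_\ast\, J^{(n)}(\lambda)\, \sigma_\ast .
\]
This rests on two observations: since $\gamma^{n}_{-}=-\gamma^{n}_{+}$ as oriented curves in Figure~\ref{fig:curves3}, the indicator functions transform as $\chi_{\gamma^n_\pm}(-\lambda)=\chi_{\gamma^n_\mp}(\lambda)$; and the $(1,2)$- and $(2,1)$-blocks of $J^{(n)}(\lambda)$ carry arguments $r^{(n)}(\lambda)$ and $r^{(n)}(-\lambda)$ respectively, so substituting $\lambda\mapsto -\lambda$ exchanges their positions, which is exactly what conjugation by $\sigma_\ast$ implements on a block matrix.

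The next step is to promote this to a symmetry of the solution. Define $\hat{\Xi}(\lambda):=\sigma_\ast \Xi^{(n)}(-\lambda)\sigma_\ast$. The orientations fixed in Figure~\ref{fig:curves3} ensure that $\lambda\mapsto -\lambda$ sends the $+$-side of $\gamma^n_\pm$ to the $+$-side of $\gamma^n_\mp$, so that $\hat{\Xi}$ inherits the jump relation $\hat\Xi_+(\lambda)=\hat\Xi_-(\lambda)J^{(n)}(\lambda)$. The normalization at infinity is also preserved since $\sigma_\ast I_{2r}\sigma_\ast=I_{2r}$. Uniqueness of the solution of Riemann--Hilbert Problem~\ref{pb:rhp} (granted by Theorem~\ref{Theorem:existence solution rhp}) then forces
\[
\Xi^{(n)}(-\lambda)=\sigma_\ast\, \Xi^{(n)}(\lambda)\, \sigma_\ast .
\]

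Substituting the asymptotic expansion~\eqref{eq:asymp} into this identity and matching coefficients of $\lambda^{-j}$ yields $(-1)^{j}\Xi^{(n)}_j=\sigma_\ast \Xi^{(n)}_j \sigma_\ast$ for every $j\geq 1$. This is an involution-eigenspace decomposition for conjugation by $\sigma_\ast$ on $\mathrm{Mat}(2r\times 2r,\mathbb{C})$: the $+1$ eigenspace consists of $2\times 2$ block matrices with equal diagonal blocks and equal off-diagonal blocks, i.e.\ matrices of the form $\alpha\otimes I_2+\beta\otimes\sigma_1$; the $-1$ eigenspace consists of matrices with opposite diagonal blocks and opposite off-diagonal blocks, parametrized as $\alpha\otimes\sigma_3+\beta\otimes\sigma_2$. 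Identifying the $(1,1)$- and $(1,2)$-blocks with $\alpha_l^{(n)}$ and $\beta_l^{(n)}$ gives~\eqref{eq:symcoef}. The analogous statement for $\Theta^{(n)}=\big(\Xi^{(n)}\big)^{-1}$ is immediate: inverting the symmetry relation for $\Xi^{(n)}$ gives $\Theta^{(n)}(-\lambda)=\sigma_\ast\Theta^{(n)}(\lambda)\sigma_\ast$, and the same coefficient-matching argument applies. The only delicate point in the whole argument is the orientation bookkeeping in step two; the rest is routine algebra.
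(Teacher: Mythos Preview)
Your proof is correct and follows essentially the same route as the paper: establish the symmetry $\hat{\sigma}_1 J^{(n)}(\lambda)\hat{\sigma}_1=J^{(n)}(-\lambda)$, lift it to $\Xi^{(n)}$ by uniqueness, and read off the parity constraints on the asymptotic coefficients. The one difference is in the treatment of $\Theta^{(n)}$: the paper sets up a separate Riemann--Hilbert problem for the inverse (with jump matrix $H^{(n)}=\big(J^{(n)}\big)^{-1}$ acting on the left) and reruns the symmetry argument, whereas you simply invert the relation $\Xi^{(n)}(-\lambda)=\sigma_\ast\Xi^{(n)}(\lambda)\sigma_\ast$ directly. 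Your shortcut is cleaner and entirely legitimate. One small quibble: you cite Theorem~\ref{Theorem:existence solution rhp} for uniqueness, but that theorem addresses existence; uniqueness is the standard Liouville argument for normalized Riemann--Hilbert problems, which the paper also invokes implicitly.
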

\begin{proof}We first prove the symmetry condition for the asymptotic coefficients of $ \Xi^{(n)}. $
We start observing that the jump matrix $ J^{(n)} $ for $\lambda \in \gamma^{n}_{+} \cup \gamma^{n}_{-}$ has the following symmetry
\begin{equation*}
\hat{\sigma}_{1}J^{(n)}(\lambda,\vec{s})\hat{\sigma}_{1}=J^{(n)}(-\lambda,\vec{s}),
\end{equation*}
just using the definition of $ \gamma^{n}_{-}=-\gamma^{n}_{+} $.
This directly implies that also the solution of the Riemann--Hilbert Problem~\ref{pb:rhp} has the same symmetry property. Thus for any $\lambda$ we have that
\begin{equation*}
\Xi^{(n)}(-\lambda)=\hat{\sigma} _{1}\Xi^{(n)}(\lambda)\hat{\sigma}_{1}.
\end{equation*}

Computing the asymptotic expansion at $ \infty $ of both sides of this equation, we have that
$ (-1)^{k}\Xi^{(n)} _{k}= \hat{\sigma} _{1}\Xi_{k}^{(n)}(\lambda)\hat{\sigma}_{1}$. This directly implies the two equations~\eqref{eq:symcoef} for $ k=2j $ or $ k=2j-1 $.

Concerning the statement for the asymptotic coefficients of the inverse of $ \Xi^{(n)} $, namely $ \Theta^{(n)} $, the proof follows by the fact that $ \Theta^{(n)} $ solves another Riemann--Hilbert problem, with same symmetry for the jump matrix. Indeed, consider the following problem for a function~$ \Theta^{(n)} $:
\begin{itemize}\itemsep=0pt\label{pb:invrhp}
\item $ \Theta^{(n)} $ is a $ (\lambda-) $analytic matrix-valued function on $\mathbb{C}\setminus \big( \gamma^{n}_{+} \cup \gamma^{n}_{-}\big) $ admitting continuous extension from either side to $\gamma^{n}_{+} \cup \gamma^{n}_{-}$;
\item it has a jump condition for each $ \lambda \in \gamma^{n}_{+} \cup \gamma^{n}_{-} $
\begin{equation*}
\Theta^{(n)}_{+}(\lambda)=
\underbrace{\begin{pmatrix}
I_{r} & 2\pi {\rm i} r^{(n)} (\lambda)\chi_{\gamma^{n}_{+}}(\lambda)\\
2\pi {\rm i} r^{(n)} (-\lambda)\chi_{\gamma^{n}_{-}}(\lambda)& I_{r}
\end{pmatrix}}_{:=H^{(n)}(\lambda,\vec{s})}\Theta_{-} ^{(n)}(\lambda);
\end{equation*}
\item it has the asymptotic condition for $|\lambda| \rightarrow \infty $
\begin{equation*}
\Theta^{(n)}(\lambda) \sim I_{2r } + \sum_{j \geq 1 } \frac{\Theta^{(n)}_{j}}{\lambda^{j}}.
\end{equation*}
\end{itemize}
The function $ \Theta^{(n)} $ with these properties is the inverse of the solution of Problem~\ref{pb:rhp}. Indeed: the functions $\Theta^{(n)}\Xi^{(n)}(\lambda)$, and $\Xi^{(n)}\Theta^{(n)} $ have no jumps along $ \gamma^{n}_{+} \cup \gamma^{n}_{-} $ and they both behave like the identity matrix at $ \infty $. Thus by the generalized Liouville theorem, they both have to coincide with the identity matrix.

We then observe that the jump matrix $H^{(n)} $ here has the same symmetry property of $ J^{(n)}$, i.e., $ \hat{\sigma}_{1}H^{(n)}(\lambda,\vec{s})\hat{\sigma}_{1}=H^{(n)}(-\lambda,\vec{s}), $ for each $ \lambda \in \gamma^{n}_{+} \cup \gamma^{n}_{-}$. Thus, exactly as before, even the function $ \Theta^{(n)} $ has the same property:
\begin{equation*}
\hat{\sigma}_{1}\Theta^{(n)}(\lambda,\vec{s})\hat{\sigma}_{1}=\Theta^{(n)}(-\lambda,\vec{s}).
\end{equation*}
We conclude then that the asymptotic coefficients of $ \Theta^{(n)} $ have the same form of the $ \Xi_{k}$, i.e.,
\begin{gather}
\Theta^{(n)}_{2j}=\tilde{\alpha}_{2j}^{(n)} \otimes I_{2r}+\tilde{\beta}_{2j}^{(n)}\otimes\sigma_{1},\nonumber\\
\Theta^{(n)}_{2j-1}=\tilde{\alpha}_{2j-1}^{(n)} \otimes \sigma_{3}+\tilde{\beta}_{2j-1}^{(n)}\otimes\sigma_{2}, \qquad j\geq 1,\label{eq:symcoefinv}
\end{gather}
where, as before, $\tilde{\alpha}_{l}^{(n)}$ and $ \tilde{\beta}_{l}^{(n)}$ for every $l\geq 1$ correspond to the $r\times r$ matrices in the entries~$(1,1)$ and $(1,2)$ of the block matrix $\Theta_l^{(n)}$.
\end{proof}

We are now ready to state the fundamental result that connects the matrix Airy Hankel operators to these Riemann--Hilbert problems.

Supposing that the solutions of the Riemann--Hilbert Problem~\ref{pb:rhp} and its inverse exist, we have the following result.
\begin{Remark}
Existence conditions for $ \Xi^{(n)} $ (and thus $\Theta^{(n)}$) are given at the end of the section (see Theorem~\ref{Theorem:existence solution rhp}).
\end{Remark}

\begin{Theorem}\label{Proposition:ascoeffredolm}
For each $ n \in \mathbb{N} $, consider $ \Xi^{(n)} $ the solution of the Riemann--Hilbert Problem~{\rm \ref{pb:rhp}} and its inverse $\Theta^{(n)}:= \big( \Xi^{(n)} \big)^{-1} $. Then the following identities hold
\begin{gather}
\dS \ln \big( F^{(n)}(s_{1},\dots,s_{r})\big) =\int_{\gamma^{n}_{+}\cup\gamma^{n}_{-} }\Tr\left(\Theta_{-}^{(n)} \big( \Xi_{-}^{(n)} \big) '\dS J^{(n)} \big( J^{(n)}\big) ^{-1} \right) \frac{{\rm d}\lambda}{2\pi {\rm i}}\nonumber\\
\hphantom{\dS \ln \big( F^{(n)}(s_{1},\dots,s_{r})\big)}{} =-2{\rm i} \Tr\big( \alpha^{(n)}_{1}\big),\label{eq:alphafred}
\end{gather}
where in the integral in the middle we indicate with $ ' $ the derivation w.r.t.\ the spectral parame\-ter~$ \lambda $ and the differential operator~$ \dS $ is defined as in~\eqref{eq:ds}.
\end{Theorem}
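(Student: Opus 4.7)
The strategy is to prove the two equalities in \eqref{eq:alphafred} separately.

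\textbf{First equality.} This is the standard Malgrange/Jimbo--Miwa--Ueno formula for the logarithmic derivative of a tau function, applied to the RHP in Problem~\ref{pb:rhp}. To set it up, I would invoke the matrix extension of the Its--Izergin--Korepin--Slavnov integrable-operator theory of \cite{bertola2012fredholm}: the integral representation of the Airy functions lets one factor $\mathcal{A}\mathrm{i}_{2n+1}$ through a pair of operators whose product has the same trace-class Fredholm determinant as an integrable operator $\mathcal{K}_n$ on $L^2(\gamma^n_+\cup\gamma^n_-,\mathbb{C}^r)$, and whose associated RHP is precisely Problem~\ref{pb:rhp}. The general identity for the logarithmic derivative of the Fredholm determinant of an integrable operator (expressed via the jump and the RHP solution) then yields the first equality.

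\textbf{Second equality.} The crucial algebraic observation is that $\theta_{2n+1}(\mu,\vec{s}) = \frac{\mathrm{i}\mu^{2n+1}}{2(2n+1)}I_r + \mathrm{i}\mu S$ satisfies $\dS\theta_{2n+1}(\lambda) = \mathrm{i}\lambda I_r$, and a short block computation on each arc (using $\theta_{2n+1}(-\lambda)=-\theta_{2n+1}(\lambda)$ to handle $\gamma^n_-$) gives the commutator identity $\dS J^{(n)} = [\mathrm{i}\lambda\hat\sigma_3, J^{(n)}]$, hence $\dS J^{(n)}(J^{(n)})^{-1} = \mathrm{i}\lambda\bigl(\hat\sigma_3 - J^{(n)}\hat\sigma_3(J^{(n)})^{-1}\bigr)$. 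Substituting this in the integrand, applying the jump $\Xi^{(n)}_+ = \Xi^{(n)}_- J^{(n)}$, and using cyclicity of the trace, I would rewrite it as
\begin{equation*}
-\mathrm{i}\lambda\bigl(\phi_+(\lambda)-\phi_-(\lambda)\bigr) + \mathrm{i}\lambda\Tr\bigl((J^{(n)})^{-1}(J^{(n)})'\hat\sigma_3\bigr),
\end{equation*}
where $\phi(\lambda) := \Tr\bigl(\Theta^{(n)}(\lambda)(\Xi^{(n)})'(\lambda)\hat\sigma_3\bigr)$ is holomorphic off $\gamma^n_+\cup\gamma^n_-$. The second summand vanishes identically because $(J^{(n)})^{-1}(J^{(n)})'\hat\sigma_3$ is strictly block-off-diagonal on each arc, hence traceless.

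To compute the remaining jump integral I would use Plemelj's representation for $\phi$. The asymptotic expansion $\Xi^{(n)}\sim I+\Xi^{(n)}_1/\lambda+\cdots$ yields $\Theta^{(n)}(\Xi^{(n)})' = -\Xi^{(n)}_1/\lambda^2 + O(1/\lambda^3)$, so $\phi(\lambda) \sim -\Tr(\Xi^{(n)}_1\hat\sigma_3)/\lambda^2$ at infinity. Matching the $1/\lambda^2$-coefficient in the representation $\phi(\lambda) = \frac{1}{2\pi\mathrm{i}}\int_{\gamma^n_+\cup\gamma^n_-}(\phi_+-\phi_-)(\mu)(\mu-\lambda)^{-1}\,d\mu$ gives $\int(\phi_+-\phi_-)\lambda\,\frac{d\lambda}{2\pi\mathrm{i}} = \Tr(\Xi^{(n)}_1\hat\sigma_3)$. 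The symmetry property $\Xi^{(n)}_1 = \alpha^{(n)}_1\otimes\sigma_3 + \beta^{(n)}_1\otimes\sigma_2$ already recorded in \eqref{eq:symcoef} then yields $\Tr(\Xi^{(n)}_1\hat\sigma_3) = 2\Tr(\alpha^{(n)}_1)$, and the $-\mathrm{i}$ prefactor produces exactly $-2\mathrm{i}\Tr(\alpha^{(n)}_1)$, as claimed.

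The main obstacle is the first equality, which rests on the matrix IIKS identification and the tau-function formalism of \cite{bertola2012fredholm}; by contrast, the second equality reduces to a clean (if sign-delicate) residue/Plemelj computation once one notices the commutator identity $\dS J^{(n)}=[\mathrm{i}\lambda\hat\sigma_3,J^{(n)}]$ and the vanishing of the auxiliary trace term.
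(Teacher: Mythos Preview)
Your proposal is correct and follows essentially the same route as the paper: the first equality is obtained by invoking the matrix IIKS/tau-function results of \cite{bertola2012fredholm} (Corollary~2.1 and Theorem~4.1 there), and for the second equality both you and the paper use the commutator identity $\dS J^{(n)}(J^{(n)})^{-1}=\mathrm{i}\lambda\big(\hat\sigma_3-J^{(n)}\hat\sigma_3(J^{(n)})^{-1}\big)$ together with the vanishing of $\Tr\big((J^{(n)})^{-1}(J^{(n)})'\hat\sigma_3\big)$, then reduce to extracting the $\lambda^{-2}$ coefficient of $\phi(\lambda)=\Tr\big(\Theta^{(n)}(\Xi^{(n)})'\hat\sigma_3\big)$. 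The only cosmetic difference is that the paper phrases this last step as computing $-\Res_{\lambda=\infty}\big(\mathrm{i}\lambda\,\phi(\lambda)\big)$ by deforming a large circle onto $\gamma^n_+\cup\gamma^n_-$, whereas you use the Plemelj/Cauchy representation of $\phi$ and match the $1/\lambda^2$ term; these are equivalent formulations of the same residue computation.
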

\begin{proof}The proof follows as an application to this very specific case of some general result obtained in~\cite{bertola2012fredholm}. We split the proof in two parts, one for each equality in~\eqref{eq:alphafred}.

In order to obtain the first equality we need essentially two results. The first one establishes the relation between Fredholm determinant of the Airy matrix operator and Fredholm determinant of certain integral kernel operator \cite[Corollary~2.1]{bertola2012fredholm}). In particular, we first get that the
Fredholm determinants of $\lbrace \mathcal{A}{\rm i}_{2n+1}\rbrace_{n \in \mathbb{N}} $ are equal to the ones of the integral operators acting on $ L^{2}\big(\gamma_{+}^{(n)},\mathbb{C}^{r} \big) $ with kernels
\begin{equation} \label{eq:intkern}
\mathcal{K}^{(n)}( \lambda, \mu) =\frac{r^{(n)}(\lambda, \mu ) }{\lambda+\mu},
\end{equation}
with $ r^{(n)}(\lambda, \mu) $ defined as in~\eqref{eq:funcrjump}.

As by product we then have that
\begin{equation*}
F^{(n)}(s_{1},\dots,s_{r})=\det \big({\rm Id}_{\gamma_{+}^{(n)}}- \big( \mathcal{K}^{(n)}\big)^{2} \big) .
\end{equation*}
The second result needed comes from the study of matrix integral kernels of type~\eqref{eq:intkern}, through Riemann--Hilbert problems. Indeed, it allows to compute the Fredholm determinants of these integrable operators in terms of the solution of Riemann--Hilbert Problem~\ref{pb:rhp}. In particular, by applying \cite[Theorem~4.1]{bertola2012fredholm}, we have that
\begin{equation*}
\int_{\gamma^{n}_{+}\cup\gamma^{n}_{-} }\Tr\left(\Theta_{-}^{(n)} \big( \Xi_{-}^{(n)} \big) '\dS J^{(n)} \big( J^{(n)}\big) ^{-1} \right) \frac{{\rm d}\lambda}{2\pi {\rm i}}= \dS \ln \det\big( I_{\gamma_{+}^{(n)}}-\big( \mathcal{K}^{(n)}\big) ^{2}\big) .
\end{equation*}
Thus the first identity in the statement holds.

For what concerns the second identity of the statement, we proceed by direct computation of the integral
\begin{equation}
\label{eq:integral}
\int_{\gamma^{n}_{+}\cup\gamma^{n}_{-} }\Tr\big(\Theta_{-}^{(n)} \big( \Xi_{-}^{(n)} \big) '\dS J^{(n)} \big( J^{(n)}\big) ^{-1} \big) \frac{{\rm d}\lambda}{2\pi {\rm i}}.
\end{equation}
First of all, we observe that the jump matrix $ J^{(n)}(\lambda,\vec{s}) $ that appears in the jump condition~\eqref{eq:jump}, admits the factorization
\[ J^{(n)}(\lambda,\vec{s})= \exp\big( \theta^{(n)}(\lambda,\vec{s})\otimes \sigma_{3} \big)J^{(n)}_{0}\exp\big({-} \theta^{(n)}(\lambda,\vec{s})\otimes \sigma_{3} \big), \]
with $ J^{(n)}_{0} $ the constant matrix given by
\[J^{(n)}_{0}= \begin{pmatrix}
I_{r}&C\\
C&I_{r}
\end{pmatrix}.\]
Thus we can easily compute the second factor appearing under the trace in the integral~\eqref{eq:integral}:
\begin{equation}\label{eq:factJ}
\left( \dS J^{(n)} \right) \big( J^{(n)}\big) ^{-1}= {\rm i}\lambda\hat{\sigma}_{3}-J^{(n)}\big({\rm i}\lambda \hat{\sigma}_{3} \big)\big( J^{(n)}\big) ^{-1} .
\end{equation}
We are now going to show that the integral in~\eqref{eq:integral} is actually just the formal residue at~$ \infty $ of a certain function. Furthermore in this particular case, due to the form of the matrix~$J^{(n)}$, the residue can be explicitly computed using equation~\eqref{eq:factJ}.

To start with, we consider the following function
\begin{equation}
\label{eq:ausfun} \Tr\left(\Theta^{(n)} \big( \Xi^{(n)}\big)' \dS\big( \theta^{(n)} \otimes \sigma_{3}\big) \right)= \Tr\big(\Theta^{(n)} \big( \Xi^{(n)}\big)' {\rm i}\lambda\hat{\sigma} _{3}\big) .
\end{equation}
Its formal residue at $ \infty $ can be computed as
\[ -\Res_{\lambda=\infty}\Tr\big(\Theta^{(n)} \big( \Xi^{(n)}\big)' {\rm i}\lambda\hat{\sigma_{3}}\big) =\lim_{R\rightarrow \infty}\int_{|\lambda|=R} \Tr\big(\Theta^{(n)} \big( \Xi^{(n)}\big) ' {\rm i}\lambda\hat{\sigma_{3}} \big)\frac{{\rm d}\lambda}{2\pi {\rm i}}. \]
Now, this counterclockwise circle for $ R \rightarrow\infty $, can be deformed like $ \gamma_{+}^{(n)} \cup\gamma_{-}^{(n)}$. As byproduct, the formal residue of \eqref{eq:ausfun} can be rewritten, taking into account the boundary values of $ \Theta^{(n)} $ and $ \big( \Xi^{(n)}\big) '$ along the curves $ \gamma_{\pm}^{(n)} $, as follows
\[ \int_{\gamma^{n}_{+}\cup\gamma^{n}_{-} }\Tr\big( \big({-}\Theta_{+}^{(n)} \big( \Xi_{+}^{(n)}\big)'+\Theta_{-}^{(n)} \big( \Xi_{-}^{(n)}\big)'\big) {\rm i}\lambda\hat{\sigma_{3}}\big)\frac{{\rm d}\lambda}{2\pi {\rm i}} . \]
 Now, from the jump condition \eqref{eq:jump} we deduce that all along the curves $ \gamma_{\pm}^{(n)} $ we have the relation
\[ \big( \Xi_{+}^{(n)}\big) '= \big( \Xi_{-}^{(n)}\big) ' J^{(n)} + \big( \Xi_{-}^{(n)}\big) \big( J^{(n)}\big) '. \]
Thus replacing it in the first integral above we get
\begin{gather*}
\int_{\gamma^{n}_{+}\cup\gamma^{n}_{-} }\Tr\big( \big({-}\Theta_{+}^{(n)} \big( \Xi_{+}^{(n)}\big)'+\Theta_{-}^{(n)} \big( \Xi_{-}^{(n)}\big)'\big) {\rm i}\lambda\hat{\sigma_{3}}\big)\frac{{\rm d}\lambda}{2\pi {\rm i}}\\
\qquad{}=-\int_{\gamma^{n}_{+}\cup\gamma^{n}_{-} }\Tr\big(\big( \big( J^{(n)}\big) ^{-1}\Theta^{(n)}_{-}\big(\big( \Xi_{-}^{(n)}\big)'J^{(n)}+\Xi_{-}^{(n)}\big( J^{(n)}\big) ^{'} \big) -\Theta_{-}^{(n)}\big( \Xi_{-}^{(n)}\big)' {\rm i}\lambda \hat{\sigma}_{3} \big) \big)\frac{{\rm d}\lambda}{2\pi{\rm i}} \\
\qquad{}=-\int_{\gamma^{n}_{+}\cup\gamma^{n}_{-} }\Tr\big( \big( \big( J^{(n)}\big) ^{-1}\Theta^{(n)}_{-}\big( \Xi_{-}^{(n)}\big)' J^{(n)}+\big( J^{(n)}\big)^{-1} J '-\Theta_{-}^{(n)}\big( \Xi_{-}^{(n)}\big)'\big){\rm i}\lambda \hat{\sigma}_{3}\big)\frac{{\rm d}\lambda}{2\pi{\rm i}}\\
\qquad{}=-\int_{\gamma^{n}_{+}\cup\gamma^{n}_{-} }\Tr\big(\Theta^{(n)}_{-}\big( \Xi_{-}^{(n)}\big)'\big( J^{(n)}{\rm i}\lambda\hat{\sigma}_{3}\big( J^{(n)}\big) ^{-1}-{\rm i}\lambda\hat{\sigma}_{3}\big)\big) \frac{{\rm d}\lambda}{2\pi{\rm i}}\\
\qquad{}=\int_{\gamma^{n}_{+}\cup\gamma^{n}_{-} }\Tr\left(\Theta^{(n)}_{-}\big( \Xi_{-}^{(n)}\big)'\dS J^{(n)}\big( J^{(n)}\big) ^{-1}\right)\frac{{\rm d}\lambda}{2\pi{\rm i}},
\end{gather*}
where in the last passages we used the invariance of the trace by conjugation and the fact that the quantity $ \big( J^{(n)}\big)^{-1} \big(J^{(n)}\big)^{'} {\rm i}\lambda\hat{\sigma}_{3} $ is trace free.

Finally, using the asymptotic expansion at $ \infty $ given in \eqref{eq:asymp}, we get that
\[ \Res_{\lambda=\infty}\Tr\big(\Theta^{(n)} \big( \Xi^{(n)}\big)' {\rm i}\lambda\hat{\sigma}_{3}\big)=-2{\rm i}\Tr\big( \alpha_{1}^{(n)}\big) ,\]
and this concludes the proof.
\end{proof}

\begin{Remark}
In the study of isomonodromy deformations, the quantity
\begin{equation*}
\int_{\gamma^{n}_{+}\cup\gamma^{n}_{-} }\Tr\left(\Theta_{-} \Xi_{-} '\dS J^{(n)} \big( J^{(n)}\big) ^{-1} \right) \frac{{\rm d}\lambda}{2\pi{\rm i}}
\end{equation*}
is associated to the isomonodromic tau function $\tau_{\Xi^{(n)}}$ related to the Riemann--Hilbert Problem~\ref{pb:rhp} depending on the parameters $ \lbrace s_{k} \rbrace_{k=1}^{r} $, through the formula
\begin{equation*}
\dS \ln \tau_{\Xi^{(n)}}=\int_{\gamma^{n}_{+}\cup\gamma^{n}_{-} }\Tr\left(\Theta_{-} \Xi_{-} '\dS J^{(n)} \big( J^{(n)}\big) ^{-1} \right) \frac{{\rm d}\lambda}{2\pi{\rm i}}.
\end{equation*}
This notion was first introduced in~\cite{jimbo1981monodromy}, and then generalized for example in~\cite{bertola2010dependence}. With Theorem~\ref{Proposition:ascoeffredolm} we recover for any Airy matrix Hankel operator~\eqref{eq:convop} the relation between the Fredholm determinant $ F^{(n)}(s_{1},\dots,s_{r})$ and the isomonodromic tau function associated to the Riemann--Hilbert Problem~\ref{pb:rhp}, that was proved in Theorem~4.1 of~\cite{bertola2012fredholm} for Fredholm determinants of generic matrix Hankel operators.
\end{Remark}

Finally, in order to use the formula \eqref{eq:alphafred} for the logarithmic derivative of $ F^{(n)}(s_{1},\dots,s_{r}) $, we need to find out whether the solution $\Xi^{(n)} $ of the Riemann--Hilbert Problem~\ref{pb:rhp} exists or not. In particular, we are going to see that under certain assumptions on the constant matrix~$ C $, the existence of~$\Xi^{(n)} $ is assured. The following result is indeed a generalization of Theorem~5.1 in~\cite{bertola2012fredholm}, for the generalized Airy matrix operators defined in~\eqref{eq:convop}.
\begin{Theorem}\label{Theorem:existence solution rhp}
Let the matrix~$ C $ be Hermitian, then the solution~$\Xi^{(n)} $ of the Riemann--Hilbert Problem~{\rm \ref{pb:rhp}} exists if and only if the eigenvalues of~$ C $ lay in the interval $[-1,1] . $
\end{Theorem}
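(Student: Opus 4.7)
The plan is to translate the existence of $\Xi^{(n)}$ into the invertibility of a concrete Hankel operator, and then to factorize that operator so as to read off how its spectrum is controlled by the eigenvalues of $C$. Combining the integrable-operator identity already used in the proof of Theorem~\ref{Proposition:ascoeffredolm}, namely
\[
F^{(n)}(\vec{s})=\det\big(\mathrm{Id}-\big(\mathcal{K}^{(n)}\big)^{2}\big)=\det\big(\mathrm{Id}-\mathcal{A}{\rm i}_{2n+1}^{2}\big),
\]
with the general existence criterion for integrable Its--Izergin--Korepin--Slavnov operators from~\cite{bertola2012fredholm}, I would first reduce the existence of $\Xi^{(n)}$ to the invertibility of $\mathrm{Id}-\mathcal{A}{\rm i}_{2n+1}^{2}$ on $L^{2}\big(\mathbb{R}_{+},\mathbb{C}^{r}\big)$. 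When $C$ is Hermitian, the relation $c_{jk}=\overline{c_{kj}}$ together with the real-valuedness of $\mathrm{Ai}_{2n+1}$ on $\mathbb{R}$ makes the kernel matrix $\mathbf{Ai}_{2n+1}(x+y,\vec{s})$ pointwise Hermitian, so $\mathcal{A}{\rm i}_{2n+1}$ is self-adjoint and trace class, and the desired invertibility becomes the spectral condition $\pm 1 \notin \sigma(\mathcal{A}{\rm i}_{2n+1})$.

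The key step is the factorization
\[
\mathcal{A}{\rm i}_{2n+1}=\mathbf{V}^{*}(A\otimes C)\,\mathbf{V},
\]
where $\mathbf{V}\colon L^{2}\big(\mathbb{R}_{+},\mathbb{C}^{r}\big)\hookrightarrow L^{2}\big(\mathbb{R},\mathbb{C}^{r}\big)$ is the block isometry defined by $(\mathbf{V}\mathbf{f})_{j}(u)=f_{j}(u-s_{j})\chi_{[s_{j},\infty)}(u)$, and $A\colon L^{2}(\mathbb{R})\to L^{2}(\mathbb{R})$ is the full-line Airy convolution, defined as the Fourier multiplier with symbol $\widehat{\mathrm{Ai}}_{2n+1}(\xi)\propto \exp\big(-{\rm i}\xi^{2n+1}/(2n+1)\big)$. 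A direct change of variables verifies the factorization on Schwartz functions; since the symbol has constant modulus $A$ is unitary, and the reality and even symmetry of the kernel $\mathrm{Ai}_{2n+1}(x+u)$ make $A$ additionally self-adjoint with $A^{2}=\mathrm{Id}$, so $\|A\|=1$. Since $\mathbf{V}$ is an isometry and $\|A\otimes C\|=\|C\|_{\mathrm{op}}=\max_{l}|d_{l}|$, where $\{d_{l}\}$ are the eigenvalues of $C$, the factorization yields the bound
\[
\|\mathcal{A}{\rm i}_{2n+1}\|_{\mathrm{op}}\le \max_{l}|d_{l}|.
\]

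For the ``if'' direction, when $|d_{l}|<1$ for every $l$ one has $\sigma\big(\mathcal{A}{\rm i}_{2n+1}^{2}\big)\subset[0,1)$ and invertibility is immediate; the boundary case $\max_{l}|d_{l}|=1$ requires a rigidity argument showing that a putative eigenvector $\mathbf{f}$ of $\mathcal{A}{\rm i}_{2n+1}^{2}$ at $1$ would force $(A\otimes C)\mathbf{V}\mathbf{f}$ both to lie in $\mathrm{Ran}\,\mathbf{V}$ and to be an eigenvector of $A\otimes C$ at $\pm 1$, which contradicts the fact that the Airy convolution $A$ does not preserve any half-line support structure. For the converse, if some $|d_{l}|>1$, I would specialise to the diagonal configuration $\vec{s}=(s,\dots,s)$, where $\mathcal{A}{\rm i}_{2n+1}$ decouples as the tensor product $C\otimes B_{s}$ with $B_{s}$ the scalar Airy Hankel operator on $L^{2}(\mathbb{R}_{+})$; since the eigenvalues of $B_{s}$ accumulate at $\pm 1$ as $s\to -\infty$ (because $\mathbf{V}\mathbf{V}^{*}\to\mathrm{Id}$ strongly and $A^{2}=\mathrm{Id}$) and converge to $0$ as $s\to +\infty$, an intermediate-value argument produces some $s^{\ast}$ at which $d_{l}\,\lambda(s^{\ast})=\pm 1$, so $\mathrm{Id}-\mathcal{A}{\rm i}_{2n+1}^{2}$ fails to be invertible at that parameter. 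The main obstacle I anticipate is the boundary analysis at $|d_{l}|=1$, together with the quantitative control of the scalar eigenvalue curves of $B_{s}$ needed in the converse; the factorization and the Fourier computation of $A$ itself are routine.
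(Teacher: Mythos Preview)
Your proposal follows essentially the same route as the paper: reduce existence of $\Xi^{(n)}$ to invertibility of $\mathrm{Id}-\mathcal{A}{\rm i}_{2n+1}^{2}$, bound the operator norm via the factorization through the unitary full-line Airy transform (this is exactly the paper's Lemma~\ref{Lemma:unitary airy}) together with the tensor-product structure with $C$, and treat the converse by specialising to the diagonal shift $\vec{s}=(s,\dots,s)$ and running an intermediate-value argument on the scalar Airy kernel eigenvalues. One small technical correction: the full-line operator $A$ with kernel $\mathrm{Ai}_{2n+1}(x+y)$ is not a Fourier multiplier but rather $\mathfrak{F}^{-1}\mathcal{M}_{n}\mathfrak{F}^{-1}$ (a Hankel operator, not a convolution), and it is precisely this structure that gives $A=A^{*}=A^{-1}$ and hence $A^{2}=\mathrm{Id}$---a genuine Fourier multiplier with a non-constant unimodular symbol would never square to the identity.
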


Before starting the proof of Theorem~\ref{Theorem:existence solution rhp}, we
state the following lemma. For $n=1$ the result is known from~\cite{basor1999determinants,hastings1980boundary}. In the following we adapted the proof to the case of generic~$n$. For finite $z\in \mathbb{R}$, we introduce the operator
\begin{equation*}
\big( \Phi_{\mathrm{Ai}_{2n+1}}^z f\big) (x) = \int_{z}^{+\infty} \mathrm{Ai}_{2n+1} (x+y)f(y)\,{\rm d}y,\qquad f \in L^2(\mathbb{R}).
\end{equation*}
\begin{Lemma}\label{Lemma:unitary airy}
For any $ n \in \mathbb{N} $ we consider the Airy transform $ \Phi_{\mathrm{Ai}_{2n+1}}$ acting on $f \in L^2(\mathbb{R}) \cap L^1(\mathbb{R}) $ as
\begin{equation}
\label{eq:airy transform}
( \Phi_{\mathrm{Ai}_{2n+1}}f) (x)=\lim_{z\rightarrow -\infty}\big( \Phi_{\mathrm{Ai}_{2n+1}}^zf\big) (x) = \lim_{z\rightarrow -\infty}\left(\int_{z}^{+\infty} \mathrm{Ai}_{2n+1} (x+y)f(y)\,{\rm d}y \right) .
\end{equation}
Then $\lim\limits_{z\rightarrow -\infty}\big| \big| \Phi_{\mathrm{Ai}_{2n+1}}^z f\big| \big| = | | f | | $ for the $L^2(\mathbb{R})$-norm, and thus for any finite $ z$ the inequality $\big| \big| \big|\Phi_{\mathrm{Ai}_{2n+1}}^z\big| \big| \big| \leq 1 $ holds for the $L^2((z,+\infty))$ operator norm.
\end{Lemma}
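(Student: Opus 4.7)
The plan is to identify $\Phi_{\mathrm{Ai}_{2n+1}}$ as a unitary operator on $L^2(\mathbb{R})$ by conjugating the Fourier transform with multiplication by a unimodular symbol, and then to deduce the claims about $\Phi^z_{\mathrm{Ai}_{2n+1}}$ by viewing it as a restriction of this isometry.

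First, working on a dense subspace (say, Schwartz functions whose Fourier transforms have compact support) where every manipulation below is absolutely convergent, I would substitute the contour-integral representation \eqref{eq:int representation airy} of $\mathrm{Ai}_{2n+1}$ into the definition of $\Phi_{\mathrm{Ai}_{2n+1}}f$ and exchange the order of integration to obtain
\begin{equation*}
(\Phi_{\mathrm{Ai}_{2n+1}}f)(x) = \frac{1}{2\pi}\int_{\gamma^n_+} \exp\!\left(\frac{{\rm i}\mu^{2n+1}}{2n+1}\right) {\rm e}^{{\rm i}x\mu}\, \widehat f(\mu)\, {\rm d}\mu,
\end{equation*}
where $\widehat f(\mu) := \int_\mathbb{R} f(y)\, {\rm e}^{{\rm i}y\mu}\, {\rm d}y$. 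I would then deform $\gamma_+^n$ down to the real axis. On the closing arcs between $\mathbb{R}$ and the asymptotic rays $\phi_\pm^n$ of $\gamma_+^n$, writing $\mu = R\,{\rm e}^{{\rm i}\theta}$ one checks that $\mathrm{Re}\bigl({\rm i}\mu^{2n+1}/(2n+1)\bigr) = -R^{2n+1}\sin((2n+1)\theta)/(2n+1) \le 0$ on the relevant sectors $\theta \in [0,\phi_-^n]\cup[\phi_+^n,\pi]$; combined with the rapid decay of $\widehat f$, this makes the arc contributions vanish in the limit, leaving the real-line representation
\begin{equation*}
(\Phi_{\mathrm{Ai}_{2n+1}}f)(x) = \frac{1}{2\pi}\int_\mathbb{R} \psi_n(\mu)\, {\rm e}^{{\rm i}x\mu}\, \widehat f(\mu)\, {\rm d}\mu, \qquad \psi_n(\mu) := \exp\!\left(\frac{{\rm i}\mu^{2n+1}}{2n+1}\right).
\end{equation*}
Since $|\psi_n(\mu)| = 1$ for real $\mu$, Plancherel's theorem yields $\|\Phi_{\mathrm{Ai}_{2n+1}}f\|_{L^2(\mathbb{R})} = \|f\|_{L^2(\mathbb{R})}$ on this dense class, and hence $\Phi_{\mathrm{Ai}_{2n+1}}$ extends uniquely to a unitary operator on $L^2(\mathbb{R})$.

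For the truncated statements, I would write $\Phi^z_{\mathrm{Ai}_{2n+1}}f = \Phi_{\mathrm{Ai}_{2n+1}}(\chi_{[z,+\infty)} f)$, extending $f$ by zero to $\mathbb{R}$ when necessary. The isometry of $\Phi_{\mathrm{Ai}_{2n+1}}$ then gives
\begin{equation*}
\big\|\Phi^z_{\mathrm{Ai}_{2n+1}} f\big\|_{L^2(\mathbb{R})} = \big\|\chi_{[z,+\infty)} f\big\|_{L^2(\mathbb{R})},
\end{equation*}
and the right-hand side tends to $\|f\|_{L^2(\mathbb{R})}$ as $z \to -\infty$ by dominated convergence, proving the first assertion. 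For the operator-norm bound, restricting the output from $L^2(\mathbb{R})$ to $L^2((z,+\infty))$ only decreases the norm, so
\begin{equation*}
\big\|\Phi^z_{\mathrm{Ai}_{2n+1}}f\big\|_{L^2((z,+\infty))} \le \big\|\Phi^z_{\mathrm{Ai}_{2n+1}}f\big\|_{L^2(\mathbb{R})} = \|f\|_{L^2((z,+\infty))},
\end{equation*}
hence $\big|\big|\big| \Phi^z_{\mathrm{Ai}_{2n+1}}\big|\big|\big| \le 1$.

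The main obstacle is the rigorous justification of the contour deformation from $\gamma_+^n$ down to $\mathbb{R}$, because on the real axis $\psi_n$ is only oscillatory and its pairing with $\widehat f$ is not absolutely convergent for generic $L^2$ data. The cleanest route is to establish everything first on the dense class of Schwartz functions with compactly supported Fourier transforms, where every interchange of integrals and every arc estimate is trivially legitimate; the isometry just proved then extends to all of $L^2(\mathbb{R})$ by density, and the truncated statements follow immediately.
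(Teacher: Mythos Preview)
Your proof is correct and, for the unitarity of the full Airy transform, follows the same idea as the paper: both show that $\Phi_{\mathrm{Ai}_{2n+1}}$ is the conjugate of a unimodular Fourier multiplier, the paper writing this compactly as $\Phi_{\mathrm{Ai}_{2n+1}}=\mathfrak{F}^{-1}\mathcal{M}_n\mathfrak{F}^{-1}$ while you spell out the contour deformation from $\gamma_+^n$ to $\mathbb{R}$ explicitly on a dense class. Where you genuinely diverge is in deducing the bound $|||\Phi_{\mathrm{Ai}_{2n+1}}^z|||\le 1$. The paper argues by contradiction: it posits an eigenvalue $\mu$ with $|\mu|>1$ and an eigenfunction $g^z\in L^2((z,+\infty))$, extends $g^z$ by zero, and shows $\|\Phi_{\mathrm{Ai}_{2n+1}}^{\tilde z}g\|_{L^2(\mathbb{R})}>\|g\|_{L^2(\mathbb{R})}$ for all $\tilde z\le z$, contradicting the limit just established. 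Your route is more direct: once $\Phi_{\mathrm{Ai}_{2n+1}}$ is an isometry on $L^2(\mathbb{R})$, the identity $\Phi_{\mathrm{Ai}_{2n+1}}^z f=\Phi_{\mathrm{Ai}_{2n+1}}(\chi_{[z,\infty)}f)$ and the trivial inequality $\|\cdot\|_{L^2((z,\infty))}\le\|\cdot\|_{L^2(\mathbb{R})}$ give the bound in one line, with no appeal to the spectral structure (compactness, self-adjointness, existence of a maximizing eigenvector) that the contradiction argument implicitly relies on.
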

\begin{proof}
We consider $ \Phi_{\mathrm{Ai}_{2n+1}} $ the Airy transform acting as defined in~\eqref{eq:airy transform},
where inside the integral we have the scalar Airy function $\mathrm{Ai}_{2n+1}$ defined in~\eqref{eq:int representation airy}, without any shift and for real values of~$ x $.
 We introduce the Fourier transform $ \mathfrak{F} $ and its inverse $ \mathfrak{F}^{-1} $ defined on $L^2(\mathbb{R}) \cap L^1(\mathbb{R})$ (and extended to $L^{2}(\mathbb{R})$ by continuity and density argument), in the standard way as
\begin{equation*}
 ( \mathfrak{F}h ) (x):=\frac{1}{\sqrt{2\pi}}\int_{\mathbb{R}}h(\lambda)\exp (-{\rm i}x\lambda ) \, {\rm d}\lambda, \qquad
\mathfrak{F}^{-1} := \mathfrak{F} \mathfrak{I} = \mathfrak{I} \mathfrak{F},
\end{equation*}
 where $( \mathfrak{I}h ) (x)=h(-x)$, and the multiplication operator by $ \exp\big(\frac{{\rm i}x^{2n+1}}{2n+1}\big) $, denoted by $ \mathcal{M}_{n}. $
Then we observe that the Airy transform $ \Phi_{\mathrm{Ai}_{2n+1}}$ can be rewritten as the composition of these operators, in such a way that
\begin{equation*}
\Phi_{\mathrm{Ai}_{2n+1}}=\mathfrak{F}^{-1} \mathcal{M}_{n}\mathfrak{F}^{-1} = \mathfrak{F} \mathfrak{I} \mathcal{M}_{n} \mathfrak{I} \mathfrak{F}=\mathfrak{F} \mathcal{M}_{n}^{-1} \mathfrak{F} = \Phi_{\mathrm{Ai}_{2n+1}}^{-1}.
\end{equation*}
This implies that
\begin{align}
\lim_{z\rightarrow -\infty} \big| \big| \Phi_{\mathrm{Ai}_{2n+1}}^z f\big| \big| & = \lim_{z\rightarrow -\infty}\left( \int_{\mathbb{R}} \big| \Phi_{\mathrm{Ai}_{2n+1}}^z f (y)\big|^2 {\rm d}y\right) ^{\frac{1}{2}} \nonumber\\
& =\left( \int_{\mathbb{R}} \left| \int_{\mathbb{R}} \mathrm{Ai}_{2n+1}(y+u) f (u){\rm d}u \right|^2 {\rm d}y\right) ^{\frac{1}{2}}= | | f | | ,\label{eq:limit}
\end{align}
the norms being in $L^2(\mathbb{R})$.

Now we prove by contradiction the last inequality $\big| \big| \big|\Phi_{\mathrm{Ai}_{2n+1}}^z\big|\big|\big| \leq 1 $ for the $L^2((z,+\infty))$ operator norm. Suppose that there exist a scalar $\mu$ and an eigenfunction $g^z\in L^2((z,+\infty))$ such that $\Phi_{\mathrm{Ai}_{2n+1}}^z g^z =\mu g^z$ and $ |\mu | >1$. Then we can define $g\in L^2(\mathbb{R})$ as
\begin{equation*}
g(y) = \begin{cases}
 g^z (y), & \text{for}\ y\geq z,\\
 0, & \text{for} \ y< z,
\end{cases}
\end{equation*}
and we obtain for $\tilde{z} \leq z$ that $\Phi_{\mathrm{Ai}_{2n+1}}^{\tilde{ z}}g (y) =\Phi_{\mathrm{Ai}_{2n+1}}^z g^z (y) = \mu g^z(y)=\mu g(y)$ for $y\geq z$. Finally, since $ | \mu |>1 $, we have
\begin{equation*}
\big|\big| \Phi_{\mathrm{Ai}_{2n+1}}^{\tilde{ z}}g\big| \big| _{L^2(\mathbb{R})} \geq \big| \big| \Phi_{\mathrm{Ai}_{2n+1}}^{\tilde{ z}}g \big| \big| _{L^2((z,+\infty))} = | \mu | | | g | | _{L^2((z,+\infty))}> | | g | | _{L^2(\mathbb{R})}
\end{equation*}and this is in contradiction with equation \eqref{eq:limit}.
\end{proof}

We can finally provide a complete proof of Theorem \ref{Theorem:existence solution rhp}.
\begin{proof}By applying Theorem 3.1 of \cite{bertola2012fredholm} (which generalizes the fundamental result obtained first in \cite{its1990differential}) to the sequence of operators $\big( \mathcal{K}^{(n)}\big)^{2}, $ we have that the solutions $\Xi^{(n)} $ of the Riemann--Hilbert Problem~\ref{pb:rhp} exist if and only if the operator ${\rm Id} - \big( \mathcal{K}^{(n)}\big)^{2} $ is invertible. This is guaranteed by the non vanishing condition of the quantity $ \det\big({\rm Id}- \big( \mathcal{K}^{(n)}\big) ^{2}\big)=\det\big( {\rm Id}- \mathcal{A}{\rm i}_{2n+1}^{2}\big)$ (the equality follows as before from Corollary~2.1 of~\cite{bertola2012fredholm}) that is verified if the operators $\mathcal{A}{\rm i}_{2n+1}$ are such that $ | | |\mathcal{A}{\rm i}_{2n+1} | | | < 1. $ Here and in the following, $ | | | \cdot | | |$ stands for the operator norm induced from the $L^2$-norms on the domain and codomain of the relevant operator.

Supposing that the eigenvalues of $C $ are in the interval $[-1,1] , $ we are going to show that the inequality for the operator norm of $ \mathcal{A}{\rm i}_{2n+1} $ holds. Since the operators $ \mathcal{A}{\rm i}_{2n+1} $ defined in~\eqref{eq:convop}, are constructed by shifting by some component of $ \vec{s} $ the Airy function, we first observe that:
\begin{equation*}
 | | |\mathcal{A}{\rm i}_{2n+1} | | | = \big| \big| \big|\mathrm{P}_{s}\mathcal{A}{\rm i}_{2n+1}^{\vec{0}} \mathrm{P}_{s} \big| \big|\big|,
\end{equation*}
where $ \mathcal{A}{\rm i}_{2n+1}^{\vec{0}} $ is the operator without any shift, namely
\begin{equation*}
\mathcal{A}{\rm i}_{2n+1}^{\vec{0}}\mathbf{f} (x):=\int_{\mathbb{R}_{+}} \mathbf{Ai}_{2n+1}(x+y,\vec{0})\mathbf{f}(y)\,{\rm d}y .
\end{equation*}
considered from and to the space $\bigoplus_{k=1}^{r} L^2 ( [ s_{k},+\infty ) , \mathbb{C} )$
 and $ \mathrm{P}_{s} $ is the orthogonal projection
\begin{equation*}
\mathrm{P}_{s} \colon \ L^2\big( \mathbb{R}, \mathbb{C}^r \big) \longrightarrow \bigoplus_{k=1}^{r} L^2 ( [ s_{k},+\infty ) , \mathbb{C} )
\end{equation*}
acting diagonally as $\mathrm{P}_{s} :=\diag ( \chi_{[ s_{k},+\infty)}) _{k=1}^{r}$.
From equation \eqref{eq:comp conv opera}, we can see the matrix operators $ \mathcal{A}{\rm i}_{2n+1} $ written in terms of the scalar operators $ \Phi_{\mathrm{Ai}_{2n+1}}^z $ through tensor product. In particular, when there is no shift we simply have
\begin{equation*}
\mathcal{A}{\rm i}_{2n+1}^{\vec{0}}=C\otimes\Phi_{\mathrm{Ai}_{2n+1}}^0.
\end{equation*}
Finally, using the property of the scalar operator $ \Phi_{\mathrm{Ai}_{2n+1}}^z $ proved in Lemma~\ref{Lemma:unitary airy}, we conclude that
\begin{equation*}
\big|\big|\big|\mathcal{A}{\rm i}_{2n+1}^{\vec{0}}\big|\big|\big|= | | |C | | | \big| \big| \big|\Phi_{\mathrm{Ai}_{2n+1}}^0 \big|\big|\big| \leq | | |C | | | ,
\end{equation*}
where the matrix norm of $C$ above is induced by the $2$-norm on $\mathbb{C}^r$, i.e., it corresponds to the spectral radius of $C$. Then we have
\begin{equation*}
|||\mathcal{A}{\rm i}_{2n+1}|||\leq |||\mathrm{P}_{s}|||\big|\big|\big|\mathcal{A}{\rm i}_{2n+1}^{\vec{0}}\big|\big|\big| |||\mathrm{P}_{s}|||< |||C||| \leq 1,
\end{equation*}
and this concludes the proof of one of the implications in the statement.

In order to prove the other implication, we suppose that there exist $ \lambda_{0} $ eigenvalue of $ C $ such that $|\lambda_{0} | >1 $, with corresponding eigenvector $ \mathbf{v}_{0} \in \mathbb{C}^{r}$. In this case, we will be able to construct a nonzero function $ \mathbf{f}_{s}(x) $ such that there exist a value $ s_{0} $ for which
\begin{equation*}
\mathcal{A}{\rm i}_{2n+1}^{2}\mathbf{f}_{s_0}(x)=\mathbf{f}_{s_{0}}(x),
\end{equation*}
so we have that the operator $ {\rm Id}- \mathcal{A}{\rm i}_{2n+1} ^{2} $ is not invertible and thus the solution of the Riemann--Hilbert Problem~\ref{pb:rhp} does not exist.

Indeed, consider $ \mathbf{f}(x):= \mathbf{v}_{0}f(x) $, for any scalar function $ f \in L^2(\mathbb{R}) $. Then applying the operator $ \mathcal{A}{\rm i}_{2n+1}^{2} $ with a shift $ \vec{s}=(s,\dots,s) $ for a certain $s \in \mathbb{R}$ we have
\begin{equation*}
\mathcal{A}{\rm i}_{2n+1}^{2}\mathbf{f}(x)=\lambda_{0}^{2}\mathbf{v}_{0}\int_{\mathbb{R}_{+}}K_{\mathrm{Ai}_{2n+1}}(x+s,y+s)f(y)\, {\rm d}y,
\end{equation*}
where $ K_{\mathrm{Ai}_{2n+1}}$ is the $ n$-th generalized scalar Airy kernel. The corresponding kernel operator is self-adjoint, trace-class and in particular compact, acting on $ L^2([s,\infty)) $. We consider its maximum eigenvalue $ \mu(s) $ and the corresponding eigenfunction $ f_{s}(x) $. Finally by taking $ \mathbf{f}_{s}(x)=\mathbf{v}_{0}f_{s}(x) $ we get
\begin{equation*}
\mathcal{A}{\rm i}_{2n+1}^{2}\mathbf{f}_{s}(x)=\lambda_{0}^{2}\mu(s)\mathbf{f}_{s}(x).
\end{equation*}
Since $ \lambda_{0}^{2}>1 $ and $ \mu(s) $ is a continuous function such that $ \mu(s) \rightarrow 1 $ for $ s\rightarrow -\infty $ and $ \mu(s) \rightarrow 0 $ for $ s\rightarrow +\infty $, there exist a value $ s_{0} \in \mathbb{R} $ for which the above equation reads as
\begin{equation*}
\mathcal{A}{\rm i}_{2n+1}^{2}\mathbf{f}_{s_{0}}(x,\vec{s_{0}})=\mathbf{f}_{s_{0}}(x).
\end{equation*}
And this completes the proof.
\end{proof}
\begin{Remark}
As byproduct of the theorem above, we have that the operators $ \mathcal{A}{\rm i}^{2}_{2n+1}$ are bounded from above by the identity.
We can actually show that any of the operators $ \mathcal{A}{\rm i}^{2}_{2n+1}$ is also limited from below: indeed they are all totally positive on $ \mathcal{C}:= \lbrace 1,\dots,r \rbrace \times \mathbb{R} $.
The main idea to show this is to interpret $ \mathcal{A}{\rm i}^{2}_{2n+1}$ as a scalar function on $ \mathcal{C}\times \mathcal{C}, $ in this way: for any couple $ (\xi_{1},\xi_{2}) =((j_{1},x_{1}), (j_{2},x_{2}))\in \mathcal{C}\times \mathcal{C} $ we have
\begin{equation*}
\mathcal{A}{\rm i}^{2}_{2n+1}(\xi_{1},\xi_{2}) = \sum_{ k=1}^{r} c_{j_{1},k}c_{k, j_{2}} \int_{\mathbb{R}_{+}} \mathrm{Ai}_{2n+1}(x_{1}+z+s_{j_{1}}+s_{k})\mathrm{Ai}_{2n+1}(x_{2}+z+s_{j_{2}}+s_{k})\, {\rm d}z.
\end{equation*}
In this way the claim is proved if we prove that for any natural $ L $, the quantity
\[\det\big( \mathcal{A}{\rm i}^{2}_{2n+1} (\xi_{a},\xi_{b})\big)_{a,b \leq L}\] is positive.

In order to do this, we first rewrite $ \mathcal{A}{\rm i}^{2}_{2n+1}(\xi_{1},\xi_{2}) $ using the product measure $ d\mu(\xi) $ on $ \mathcal{C} $ given by the product of the counting measure on $\left\lbrace 1,\dots,r\right\rbrace$ and the Lebesgue measure on $ \mathbb{R}. $ Thus
\begin{equation}
\mathcal{A}{\rm i}^{2}_{2n+1} (\xi_{a},\xi_{b})=\int_{\mathcal{C}_{+}}F_{2n+1}(\xi_{a},\zeta)F_{2n+1}(\zeta,\xi_{b})\,{\rm d}\mu(\xi),
\end{equation}
where we defined the function $F_{2n+1}(\xi_{a},\zeta)=c_{j_{a},k}\mathrm{Ai}_{2n+1}(x_{1}+z+s_{j_{a}}+s_{k}) $.
In this way we can determine the sign of the determinant, indeed
\begin{align*}
\det\big( \mathcal{A}{\rm i}^{2}_{2n+1} (\xi_{a},\xi_{b})\big)_{a,b \leq L} &= \det \left( \int_{\mathcal{C}_{+}}F_{2n+1}(\xi_{a},\zeta)F_{2n+1}(\zeta,\xi_{b})\,{\rm d}\mu(\xi)\right) _{a,b \leq L} \\
&=\frac{1}{L!}\int_{\mathcal{C}^L_{+}}\det ( F_{2n+1}(\xi_{a},\xi_{c}) )\det ( F_{2n+1}(\xi_{c},\xi_{a}) )\prod_{c=1}^{L}{\rm d}\mu(\xi_{c})\\
&= \frac{1}{L!}\int_{\mathcal{C}^L_{+}} | \det ( F_{2n+1}(\xi_{a},\xi_{c}) ) |^2 \prod_{c=1}^{L}{\rm d}\mu(\xi_{c}) >0,
\end{align*}
where in the first passage we used a general property in measure theory, the Andreief identity (see here \cite{baik2016combinatorics} for details), and in the last one we used the fact that~$ C $ is hermitian.
\end{Remark}

In conclusion, by taking $ C $ an hermitian matrix with eigenvalues laying in the interval $[-1,1] $, any $ \mathcal{A}{\rm i}^{2}_{2n+1}$ is hermitian and thanks to Theorem~\ref{Theorem:existence solution rhp} and the previous remark, we can say that any $ \mathcal{A}{\rm i}^{2}_{2n+1}$ defines a determinantal point processes on that space of configuration $ \mathcal{C} $ (directly by applying Theorem~3 of~\cite{soshnikov2000determinantal}). In particular this implies that the Fredholm determinants $ F^{(n)}(s_1,\dots,s_r)$ are the joint probability of the last points for some multi-process on $ \mathbb{R}$ (see for instance Proposition~2.9 of~\cite{johansson2005random}), namely
\begin{equation*}
F^{(n)}(s_1,\dots,s_r) =\mathbb{P}\big( x_{i}^{\text{max}} < s_{i},\, i=1,\dots,r\big).
\end{equation*}

\section{Matrix Painlev\'e II hierarchy}\label{sec:III}
In this section, we are finally going to define our noncommutative Painlev\'e II hierarchy.
In the following, we will consider $U(\vec{s})$, $W(\vec{s}) $ as functions depending on the parameters $ s_{1},\dots,s_{r} $ with values in ${\rm Mat} ( r\times r, \mathbb{C} ) $.

In this context we will use the standard notation for the commutator and anticommutator between two matrices:
\[ [ A,\cdot ]=A\cdot-\cdot A \qquad \text{and} \qquad [A,\cdot ]_{+}=A\cdot+\cdot A. \]
In order to define a fully noncommutative version of the PII hierarchy, we first define a sequence of differential polynomials $ \mathcal{L}_{n} \ubr $ through a matrix version of the Lenard operators. Following~\cite{gordoa2016matrix}:
\begin{gather}
\mathcal{L}_{0}\ubr= \frac{1}{2}I_{r},\nonumber \\
\dS \mathcal{L}_{n}\ubr=\left( \dSSS + [ U, \cdot ] _+ \dS +\dS [ U,\cdot ] _+ + [ U,\cdot ] \dinv [ U,\cdot ] \right) \mathcal{L}_{n-1}\ubr , \qquad n \geq 1.\label{eq:lenmatrixrecursion}
\end{gather}
Here $ I_r $ denotes the identity matrix, $\dS$ denotes the differential operator defined in \eqref{eq:ds} and $ \dinv $ denotes the corresponding formal antiderivative. The locality of these operators computed in $U$ follows from Theorem~6.2 in~\cite{olver2000classification}.
\begin{Example}
The first of the differential polynomials in $U$ given by the recursive formula~\eqref{eq:lenmatrixrecursion} read as follows:
\begin{gather*}
\mathcal{L}_{1}\ubr = U, \\
\mathcal{L}_{2} \ubr =U_{2S}+3U^{2},\\
\mathcal{L}_{3}\ubr = U_{4S}+ 5 [U , U_{2S} ]_{+}+5U_{S} ^{2}+10U^{3}.
\end{gather*}
From $ n \geq 3 $ the ``noncommutative'' character of these operators appears in form of anticommutators.
\end{Example}
\begin{Remark}
In the example above and in the following we use the shorter notation $\big( \dS\big)^{n} U=U_{nS}$ for any $ n \in \mathbb{N} $.
\end{Remark}

\begin{Definition}We define a matrix PII hierarchy as follows
\begin{equation}\label{eq:piinchier}
{\rm PII}^{(n)}_{\rm NC}\left[ \alpha_{n}\right]\colon \ \left( \dS + \acomw \right) \mathcal{L}_{n}\ubr =(-1)^{n+1}4^{n} [S, W ] _{+}+a_{n}I_r,
\end{equation}
where $ U $ is as in the scalar case, the Miura transform of the function~$ W $, i.e., $ U:=\dS W-W^{2}, $ and $a_n$ are scalar constants.
\end{Definition}

In particular we will study the homogeneous hierarchy, setting $ a_{n}=0 $ for each $ n $.
\begin{Remark}
It is also possible to define a more general hierarchy, in the following way
\begin{gather*}
{\rm PII}^{(n)}_{\rm NC} [ \alpha_{n} ]\colon \ \left( \dS + \acomw \right) \mathcal{L}_{n}\ubr+\sum_{l=1}^{n-1}t_{l} \left( \dS + \acomw \right) \mathcal{L}_{l}\ubr\nonumber\\
\hphantom{{\rm PII}^{(n)}_{\rm NC} [ \alpha_{n} ]\colon}{} \ \qquad{} =(-1)^{n+1}4^{n} [S, W ] _{+}+a_{n}I_r,
\end{gather*}
for some scalars $ t_{1},\dots, t_{n-1} $. We recover the hierarchy \eqref{eq:piinchier} setting up these scalars to $ 0$.
Another matrix hierarchy was introduced in~\cite{gordoa2016matrix}, but there the time variable is a scalar.
\end{Remark}

\begin{Example}
Here are the first three equations of the homogeneous hierarchy \eqref{eq:piinchier}.
\begin{itemize}\itemsep=0pt
\item For $ n=1 $ we obtain the noncommutative analogue of the homogeneous PII equation:
\begin{equation}\label{eq:ncpii}
{\rm PII}_{\rm NC}\colon \ W_{2S}=2W^{3}+4 [S, W ] _{+}.
\end{equation}
This coincides with the homogeneus version of the fully noncommutative PII equation studied in \cite{retakh2010noncommutative}, in a more general context of any noncommutative algebra with derivation.
\item For $ n=2 $ we have the $4$-th order equation:
\begin{gather*}
{\rm PII}^{(2)}_{\rm NC}\colon \ W_{4S}=6W^{5}+4\big[W^{2}, W_{2S}\big]_{+} +2WW_{2S}W+2\big[W_{S}^{2},W \big]_{+}\\
\hphantom{{\rm PII}^{(2)}_{\rm NC}\colon \ W_{4S}=}{} +6W_{S}WW_{S}-4^2 [S, W ] _{+}.
\end{gather*}
\item For $ n=3 $ we have the $6$-th order equation:
\begin{gather*}
{\rm PII}^{(3)}_{\rm NC}\colon \ W_{6S} = 20W^{7}-15\big[W_{2S},W^{4} \big]-20W^{2}W_{2S}W^{2}- 10\big[ WW_{2S}W,W^{2} \big] _{+}\\
\hphantom{{\rm PII}^{(3)}_{\rm NC}\colon \ W_{6S} =}{} -10\big[W_{S} ^{2},W^{3}\big] _{+}-15\big[WW_{S}^{2}W,W \big] _{+}-20W_{S}W^{3}W_{S}\\
\hphantom{{\rm PII}^{(3)}_{\rm NC}\colon \ W_{6S} =}{} -25\big[W_{S}WW_{S},W^{2}\big] _{+}-5\big[W_{S}W^{2}W_{S},W \big] _{+}-10WW_{S}WW_{S}W\\
\hphantom{{\rm PII}^{(3)}_{\rm NC}\colon \ W_{6S} =}{} +6\big[W_{4S} ,W^{2}\big]+2WW_{4S}W +4 (W_{S}W_{3S}W++WW_{3S}W_{S} )\\
\hphantom{{\rm PII}^{(3)}_{\rm NC}\colon \ W_{6S} =}{} +9 (WW_{S}W_{3S}+W_{3S}W_{S}W )+15 (W_{S}WW_{3S} +W_{3S}WW_{S} ) \\
\hphantom{{\rm PII}^{(3)}_{\rm NC}\colon \ W_{6S} =}{} +25\big[W_{2S},W_{S}^{2} \big] _{+}+20 W_{S}W_{2S}W_{S}\\
\hphantom{{\rm PII}^{(3)}_{\rm NC}\colon \ W_{6S} =}{} +11\big[ W_{2S}^{2},W\big]_{+}+20W_{2S}W W_{2S}+4^3 [S, W ] _{+}.
\end{gather*}
\end{itemize}
\end{Example}
A fundamental property of matrix Lenard operators (that we are going to use in the next section in order to find the Lax pair for the hierarchy \eqref{eq:piinchier}) is given by the following formula (see~\cite{gordoa2016matrix}).
\begin{Proposition}For each $ n \in \mathbb{N}$ the Lenard operator acting on the Miura transform factorizes like
\begin{equation}\label{eq:factlennc}
\dS \mathcal{L}_{n+1}\ubr = \left(\dS -\acomw \right)\left( \dS -\comw\dinv \comw \right) \left(\dS+\acomw \right) \mathcal{L}_{n}\ubr.
\end{equation}
\end{Proposition}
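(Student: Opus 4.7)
The proposition asserts an operator identity: the composition $(\dS-\acomw)(\dS-\comw\dinv\comw)(\dS+\acomw)$ coincides, under the Miura constraint $U=\dS W-W^{2}$, with the Lenard recursion operator $\dSSS+[U,\cdot]_{+}\dS+\dS[U,\cdot]_{+}+[U,\cdot]\dinv[U,\cdot]$ featured in~\eqref{eq:lenmatrixrecursion}. The plan is to verify this identity acting on an arbitrary matrix-valued function $g$; specialising $g=\mathcal{L}_{n}\ubr$ and applying the Lenard recursion then gives the proposition.

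The technical core is the intertwining identity
\begin{equation*}
[W,(\dS+\acomw)g]=\dS[W,g]-[U,g],
\end{equation*}
which I would establish first by a direct expansion: computing $[W,Wg+gW]=W^{2}g-gW^{2}$ and $\dS[W,g]=[\dS W,g]+W\dS g-\dS g\,W$, then using $[\dS W-W^{2},g]=[U,g]$, gives the result. Integrating once in $\dS$ yields the companion relation $\dinv[W,(\dS+\acomw)g]=[W,g]-\dinv[U,g]$. This is the mechanism that converts the nonlocal $W$-commutator sitting inside the middle factor into the nonlocal $U$-commutator appearing in the recursion operator.

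Next I would expand the triple composition from right to left. Setting $h:=(\dS+\acomw)g$ and applying the middle factor $(\dS-\comw\dinv\comw)$, the intertwining identity replaces $\dinv[W,h]$ by $[W,g]-\dinv[U,g]$; using $-[W,[W,g]]=2WgW-[W^{2},g]_{+}$ together with $[\dS W-W^{2},\cdot]_{+}=[U,\cdot]_{+}$, the intermediate quantity simplifies to
\begin{equation*}
\dSS g+[U,g]_{+}+[W,\dS g]_{+}+2WgW+[W,\dinv[U,g]].
\end{equation*}
Applying the outer factor $(\dS-\acomw)$ and expanding via the identity $[W,[W,\cdot]_{+}]_{+}=[W^{2},\cdot]_{+}+2W(\cdot)W$ then yields $\dSSS g$, the local contribution $[\dS U,g]_{+}$, two copies of $[U,\dS g]_{+}$ (one from $\dS[U,g]_{+}$ and one from the combination $[\dS W,\dS g]_{+}-[W^{2},\dS g]_{+}$), the nonlocal piece $[\dS W,\dinv[U,g]]-[W^{2},\dinv[U,g]]=[U,\dinv[U,g]]$, and a residue of purely local cross terms.

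The main obstacle is the final bookkeeping of this residue. Differentiating $2WgW$ gives $2(\dS W)gW+2W(\dS g)W+2Wg(\dS W)$; the middle piece $2W(\dS g)W$ cancels against the corresponding contribution from $-[W,[W,\dS g]_{+}]_{+}$, while the remaining $2(\dS W)gW+2Wg(\dS W)$ combines with $-2[W,WgW]_{+}=-2W^{2}gW-2WgW^{2}$ to produce $2UgW+2WgU$. On the other hand, $\dS[W,\dinv[U,g]]$ contributes the local term $[W,[U,g]]$ which, together with $-[W,[U,g]_{+}]_{+}$ from the outer anticommutator, equals $-2(WgU+UgW)$. These two contributions cancel exactly, and the surviving terms assemble precisely into $\dSSS g+[\dS U,g]_{+}+2[U,\dS g]_{+}+[U,\dinv[U,g]]$, which is the Lenard recursion operator applied to $g$. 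I expect this cancellation of $UgW+WgU$ residues to be the only genuinely delicate step; once it is verified, specialisation $g=\mathcal{L}_{n}\ubr$ concludes the proof.
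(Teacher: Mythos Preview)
Your proof is correct and follows exactly the route the paper indicates: the paper simply states that the factorisation ``is achieved by the direct computation of the recursive formula for the noncommutative Lenard operators computed in the Miura transform $U=W_{S}-W^{2}$'', and what you have written is precisely that direct computation carried out in full. Your intertwining identity $[W,(\dS+\acomw)g]=\dS[W,g]-[U,g]$ and the subsequent bookkeeping of the $UgW+WgU$ residues are all accurate, and the final expression $\dSSS g+[\dS U,g]_{+}+2[U,\dS g]_{+}+[U,\dinv[U,g]]$ is indeed the recursion operator of~\eqref{eq:lenmatrixrecursion} applied to~$g$.
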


This formula is achieved by the direct computation of the recursive formula for the noncommutative Lenard operators computed in the Miura transform $ U=W_{S}-W^{2} $.

\section{The isomonodromic Lax pair}\label{sec:IV}
In this section we are finally going to find out a Lax pair for the noncommutative hierarchy~\eqref{eq:piinchier}, making use of the Riemann--Hilbert Problem~\ref{pb:rhp} introduced in Section~\ref{sec:II}. In this way we will also be able to show the relation between some solution of the hierarchy \eqref{eq:piinchier} and the Fredholm determinant of the matrix $n$-th Airy Hankel operator.

To start with, we consider a new sequence of functions, defined using the solution of the Riemann--Hilbert Problem~\ref{pb:rhp}.
\begin{Definition} For each $ n \in \mathbb{N} $, we construct
\begin{equation*}
\Psi^{(n)}(\lambda, \vec{s}):= \Xi^{(n)}(\lambda)\exp\big( \theta^{(n)}(\lambda) \otimes \sigma_{3} \big) .
\end{equation*}
\end{Definition}
It is easy to check that these functions $ \left\lbrace \Psi^{(n)}\right\rbrace_{n \in \mathbb{N}} $ actually solve a new sequence of Riemann--Hilbert problems, with constant jump conditions. Namely, the following problems.
\begin{pb}\label{pb:constjump}
Find a $ (\lambda-)$ analytic matrix-valued function \[\Psi^{(n)}(\lambda)\colon \ \mathbb{C}\setminus \big( \gamma^{n}_{+} \cup \gamma^{n}_{-}\big)\rightarrow {\rm GL}(2r, \mathbb{C}) \]
admitting continuous extension to the contour $ \gamma^{n}_{+} \cup \gamma^{n}_{-}$ from either side and such that it satisfies the following two conditions:
\begin{itemize}\itemsep=0pt
\item the jump condition for each $ \lambda \in \gamma^{n}_{+} \cup \gamma^{n}_{-} $
\begin{equation*}
\Psi^{(n)}_{+}(\lambda)=\Psi^{(n)}_{-}(\lambda)
\underbrace{\begin{pmatrix}
I_{r}&C\chi_{\gamma^{n}_{+}}(\lambda)\\
C\chi_{\gamma^{n}_{-}}(\lambda)&I_{r}
\end{pmatrix}}_{:=K^{(n)}};
\end{equation*}
\item the asymptotic condition for $|\lambda| \rightarrow \infty $
\begin{equation*}
\Psi^{(n)}(\lambda)\sim \left( I_{2r } + \sum_{j \geq 1 } \frac{\Xi^{(n)}_{j}}{\lambda^{j}} \right) \exp\big( \theta^{(n)}(\lambda) \otimes \sigma_{3} \big) .
\end{equation*}
\end{itemize}
\end{pb}

As it is standard in the theory of isomonodromic deformations, we deduce the Lax pair for the noncommutative PII hierarchy \eqref{eq:piinchier} from the Riemann--Hilbert problems with piecewise constant jumps solved by $\Psi^{(n)}$.
The main idea is the following: using the fact that each $ \Psi ^{(n)} $ has constant jump condition (i.e., the jump matrix $ K^{(n)} $ does not explicitly depend on the spectral parameter $ \lambda $ or the deformations parameters $ s_{i}$, $i=1,\dots,r $), we can thus conclude that the quantities
\begin{equation}\label{eq:logderpsi}
\dS\Psi^{(n)}\big(\Psi^{(n)} \big) ^{-1}=:L^{(n)} \qquad \text{and} \qquad \frac{\partial}{\partial \lambda} \Psi^{(n)}\big(\Psi^{(n)} \big) ^{-1}=:M^{(n)}
\end{equation}
are matrix-valued polynomials in $ \lambda$.

\begin{Remark}Here the inverse of $ \Psi^{(n)} $ is simply given by
\begin{equation*}
\big( \Psi^{(n)}\big) ^{-1}(\lambda)=\exp\big( {-} \theta^{(n)}(\lambda) \otimes \sigma_{3}\big) \Theta^{(n)}(\lambda).
\end{equation*}
\end{Remark}

Furthermore, by using the symmetries of the Riemann--Hilbert Problem~\ref{pb:rhp}, we can compute the exact form of the coefficients of these polynomials $ L^{(n)}$, $M^{(n)}$.

The final result is summarized in the proposition below.
\begin{Proposition}\sloppy
There exist two polynomial matrices in $ \lambda $, which we denote with $L^{(n)}$ and~$M^{(n)}$, respectively of degree~$ 1 $ and~$ 2n $, such that the following system of differential equations is satisfied:
\begin{gather}
\dS\Psi ^{(n)}(\lambda, \vec{s})= L^{(n)}(\lambda, \vec{s})\Psi ^{(n)} ( \lambda, \vec{s} ) ,\nonumber\\
\partial_{\lambda}\Psi ^{(n)}(\lambda, \vec{s})=M^{(n)}(\lambda, \vec{s})\Psi ^{(n)} ( \lambda, \vec{s} ).\label{eq:laxpairRH}
\end{gather}
Moreover, $ L^{(n)} $ and $ M^{(n)} $ have the following forms
\begin{equation*}
L^{(n)}(\lambda, \vec{s})= \begin{pmatrix}
{\rm i}\lambda I_{r}&W(\vec{s})\\
W(\vec{s})& - {\rm i}\lambda I_{r}
\end{pmatrix}, \qquad \text{with} \quad W(\vec{s})=2\beta_{1}^{(n)}(\vec{s}),
\end{equation*}
and
\begin{equation*}
M^{(n)}(\lambda, \vec{s})=\begin{pmatrix}
A(\lambda,\vec{s})+{\rm i}S&{\rm i}G(\lambda,\vec{s})\\
-{\rm i}G(\lambda,\vec{s})&- A(\lambda,\vec{s})-{\rm i}S
\end{pmatrix}+\begin{pmatrix}
E(\lambda,\vec{s})&F(\lambda,\vec{s})\\
F(\lambda,\vec{s})&E(\lambda,\vec{s})
\end{pmatrix},
\end{equation*}
where
\begin{gather*}
A(\lambda,\vec{s})=\sum_{ k=0}^{n}\frac{{\rm i}}{2} \lambda ^{2n-2k}A_{2n-2k} (\vec{s} ), \qquad \text{with} \quad A_{2n}=I_{r},\\
G(\lambda,\vec{s})=\sum_{k=1}^{n}\frac{{\rm i}}{2}\lambda ^{2n-2k}G_{2n-2k} ( \vec{s} ), \\
E(\lambda,\vec{s})=\sum_{k=1}^{n} \frac{{\rm i}}{2}\lambda ^{2n-2k+1}E_{2n-2k+1} ( \vec{s} ) , \\
F(\lambda,\vec{s})=\sum_{k=1}^{n}\frac{{\rm i}}{2}\lambda ^{2n-2k+1}F_{2n-2k+1} ( \vec{s} ) .
\end{gather*}
\end{Proposition}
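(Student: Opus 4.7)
The plan is to exploit the piecewise--constant structure of the jumps in Riemann--Hilbert Problem~\ref{pb:constjump}. Because $K^{(n)}$ depends neither on~$\lambda$ nor on $\vec{s}$, differentiating the jump relation $\Psi^{(n)}_+=\Psi^{(n)}_-K^{(n)}$ in $\lambda$ or in $s_k$ shows that the logarithmic derivatives
\[
L^{(n)}:=\dS\Psi^{(n)}\bigl(\Psi^{(n)}\bigr)^{-1},\qquad M^{(n)}:=\partial_\lambda\Psi^{(n)}\bigl(\Psi^{(n)}\bigr)^{-1}
\]
have no jump across $\gamma^n_+\cup\gamma^n_-$; combined with the fact that $\Xi^{(n)}$ is holomorphic and invertible off the contour, this makes $L^{(n)}$ and $M^{(n)}$ entire functions of~$\lambda$.

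To control the growth at infinity I would substitute $\Psi^{(n)}=\Xi^{(n)}\exp\bigl(\theta^{(n)}\otimes\sigma_3\bigr)$ and use $\dS\theta^{(n)}={\rm i}\lambda I_r$ and $\partial_\lambda\theta^{(n)}=\tfrac{{\rm i}\lambda^{2n}}{2}I_r+{\rm i}S$ to obtain
\begin{gather*}
L^{(n)}=\dS\Xi^{(n)}\cdot\Theta^{(n)}+{\rm i}\lambda\,\Xi^{(n)}\hat{\sigma}_3\Theta^{(n)},\\
M^{(n)}=\partial_\lambda\Xi^{(n)}\cdot\Theta^{(n)}+\Xi^{(n)}\Bigl(\bigl(\tfrac{{\rm i}\lambda^{2n}}{2}I_r+{\rm i}S\bigr)\otimes\sigma_3\Bigr)\Theta^{(n)}.
\end{gather*}
The asymptotic expansion~\eqref{eq:asymp} (and its counterpart~\eqref{eq:symcoefinv} for $\Theta^{(n)}$) immediately yields $\dS\Xi^{(n)}\cdot\Theta^{(n)}=O(\lambda^{-2})$, $\partial_\lambda\Xi^{(n)}\cdot\Theta^{(n)}=O(\lambda^{-2})$, and $\Xi^{(n)}\hat{\sigma}_3\Theta^{(n)}=\hat{\sigma}_3+O(\lambda^{-1})$. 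Hence $L^{(n)}$ grows like~$\lambda$ and $M^{(n)}$ like~$\lambda^{2n}$, and Liouville's theorem identifies them with polynomials of the announced degrees.

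To read off the explicit block structure I would then use the parity symmetries~\eqref{eq:symcoef}--\eqref{eq:symcoefinv}. Only the $\lambda^1$ and $\lambda^0$ coefficients of $L^{(n)}$ survive, giving $L^{(n)}={\rm i}\lambda\hat{\sigma}_3+{\rm i}\bigl[\Xi^{(n)}_1,\hat{\sigma}_3\bigr]$; substituting $\Xi^{(n)}_1=\alpha^{(n)}_1\otimes\sigma_3+\beta^{(n)}_1\otimes\sigma_2$ and computing the commutator in $2\times 2$ block form produces the claimed symmetric off--diagonal entry proportional to $\beta^{(n)}_1$, yielding $L^{(n)}$ of the stated shape with $W=2\beta^{(n)}_1$. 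For $M^{(n)}$ the polynomial part of $\Xi^{(n)}\bigl((\tfrac{{\rm i}\lambda^{2n}}{2}I_r+{\rm i}S)\otimes\sigma_3\bigr)\Theta^{(n)}$ is to be extracted order by order: because the $\Xi^{(n)}_j$, $\Theta^{(n)}_j$ alternate between the $\{\sigma_3,\sigma_2\}$--type (odd $j$) and the $\{I,\sigma_1\}$--type (even $j$), conjugating $\hat{\sigma}_3$ through them forces the even powers of~$\lambda$ in $M^{(n)}$ to carry $\{\sigma_3,\sigma_2\}$--blocks (the $A,G$ coefficients) and the odd powers to carry $\{I,\sigma_1\}$--blocks (the $E,F$ coefficients), matching precisely the announced ansatz. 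The leading coefficient $A_{2n}=I_r$ and the additive ${\rm i}S$ term are immediate from the leading orders of the expansion.

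The main obstacle that I expect to encounter lies beyond the structural part above: identifying the coefficients $A_{2n-2k},G_{2n-2k},E_{2n-2k+1},F_{2n-2k+1}$ with the matrix-valued Lenard polynomials $\mathcal{L}_k\ubr$ evaluated at the Miura transform $U=\dS W-W^{2}$, via the formulae~\eqref{eq:coeffM}. The natural strategy is to substitute the resulting candidate $L^{(n)},M^{(n)}$ into the zero--curvature condition $\partial_\lambda L^{(n)}-\dS M^{(n)}+[L^{(n)},M^{(n)}]=0$ and, reading it power by power in~$\lambda$, show that the resulting system on the block entries reproduces the noncommutative Lenard recursion~\eqref{eq:lenardrec} started from $\mathcal{L}_0\ubr=\tfrac12 I_r$; the lowest order in~$\lambda$ should then yield exactly the $n$-th equation~\eqref{eq:Piiintr} of the hierarchy, confirming that this is truly an isomonodromic Lax pair.
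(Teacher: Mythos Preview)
Your proposal is correct and follows essentially the same route as the paper: constant jumps imply the logarithmic derivatives are entire, the asymptotic expansion bounds their growth so Liouville makes them polynomials of the stated degrees, and the parity symmetries~\eqref{eq:symcoef}--\eqref{eq:symcoefinv} force the $\{\hat\sigma_3,\hat\sigma_2\}$ versus $\{I_{2r},\hat\sigma_1\}$ block alternation (the paper isolates this last step as a short lemma). Two minor remarks: $\dS\Xi^{(n)}\cdot\Theta^{(n)}$ is $O(\lambda^{-1})$, not $O(\lambda^{-2})$, though this still contributes nothing to the polynomial part; and your final paragraph on identifying the coefficients with the Lenard polynomials via zero curvature is exactly what the paper does, but it is the content of the \emph{next} proposition rather than of the present one.
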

\begin{proof}
We start computing the logarithmic derivative of $ \Psi^{(n)} $ w.r.t.~$ S $, namely the quantity that we defined in~\eqref{eq:logderpsi} as
\begin{equation*}
\dS\Psi^{(n)}\big(\Psi^{(n)} \big) ^{-1}:=L^{(n)}.
\end{equation*}
The matrix-valued function $ L^{(n)} $ is entire in $ \lambda $, since it has no jumps along $ \gamma^{n}_{+} \cup \gamma^{n}_{-}. $ Furthermore, its asymptotic behavior at infinity is given by a matrix polynomial of degree $1$ in $ \lambda $. Thus, by the generalized Liouville theorem, we conclude that $ L^{(n)} $ is exactly a matrix polynomial of degree~$1$ in~$\lambda$.

In particular from the asymptotic expansion at $ \infty ,$ we find an explicit form of its matrix coefficients. Here and in the following series expansions in powers of~$\lambda$ we will use the notation $ [ \; \; ] _{\geq 0}$ to indicate that we are taking only the powers $\lambda^r$ with $r\geq 0$.
\begin{align*}
L^{(n)}(\lambda)&= \dS\Psi^{(n)}\big(\Psi^{(n)} \big) ^{-1}=\left[ \left(I_{2r } + \sum_{j \geq 1 } \frac{\Xi^{(n)}_{j}}{\lambda^{j}} \right) {\rm i}\lambda \hat{\sigma}_{3}\left( I_{2r } + \sum_{j \geq 1 } \frac{\Theta^{(n)}_{j}}{\lambda^{j}} \right)\right] _{\geq 0} \\
&={\rm i}\lambda \hat{\sigma}_{3}+ {\rm i} \big( \Xi^{(n)}_{1}\hat{\sigma}_{3}+\hat{\sigma}_{3}\Theta^{(n)}_{1} \big)
={\rm i}\lambda \hat{\sigma}_{3}+{\rm i} \big[ \Xi^{(n)}_{1},\hat{\sigma}_{3} \big]
={\rm i}\lambda \hat{\sigma}_{3}+2\beta^{(n)}_{1}\otimes\sigma_{1},
\end{align*}
where in the last two passages we used the fact that $ \Theta_{1}^{(n)}=-\Xi_{1} ^{(n)} $ and then the symmetry~\eqref{eq:symcoef}.

We can then consider the second quantity defined in \eqref{eq:logderpsi}, namely
\[ \frac{\partial}{\partial \lambda} \Psi^{(n)}\big(\Psi^{(n)} \big) ^{-1}=:M^{(n)}. \]
We use the same argument as for $ L^{(n)} $. Indeed, also $ M^{(n)} $ is entire in~$ \lambda $, since it has no jumps along $ \gamma^{n}_{+} \cup \gamma^{n}_{-} $. Its asymptotic behavior at infinity is given by a matrix polynomial of degree~$2n$ in~$\lambda$. We thus conclude, by the generalized Liouville theorem, that~$ M^{(n)} $ is exactly a matrix polynomial in $\lambda$ of degree~$2n$.
In particular from the asymptotic expansion at $ \infty $ we can find an explicit form of this matrix:
\begin{gather*}
M^{(n)}(\lambda) =\partial_{\lambda} \Psi^{(n)}\big(\Psi^{(n)} \big) ^{-1}\\
\hphantom{M^{(n)}(\lambda}{}
=\left[ \left(I_{2r } + \sum_{j \geq 1 } \frac{\Xi^{(n)}_{j}}{\lambda^{j}} \right)\left(\left( \frac{{\rm i}\lambda^{2n}I_{r}}{2}+{\rm i}S\right) \otimes \sigma_{3} \right) \left( I_{2r } + \sum_{j \geq 1 } \frac{\Theta^{(n)}_{j}}{\lambda^{j}} \right)\right] _{\geq 0}\\
\hphantom{M^{(n)}(\lambda}{}
 =\frac{{\rm i}\lambda^{2n}}{2}\hat{\sigma}_{3}+{\rm i}S \otimes \sigma_{3}+ \sum_{l=1}^{2n}\frac{{\rm i}\lambda^{2n-l}}{2} \! \underbrace{\left(\Xi_{l}^{(n)} \hat{\sigma}_{3}+\hat{\sigma}_{3}\Theta_{l}^{(n)} + \!\sum_{j\colon j+k=l}\!\Xi_{j}^{(n)}\hat{\sigma}_{3}\left(\sum_{k= 1}^{l-1}\Theta_{k}^{(n)} \right) \right) }_{=M^{(n)}_{2n-l}}.
\end{gather*}
In order to obtain the remaining part of the statement, we use the following lemma.
\begin{Lemma}
The coefficient of the term $ \lambda^{2n-l} $ in the matrix $ M^{(n)} $ is such that:
\begin{itemize}\itemsep=0pt
\item if $ l=2m $, then
\begin{equation*}
M^{(n)}_{2n-2m}=A_{2n-2m}(\vec{s})\hat{\sigma}_{3}+G_{2n-2m}(\vec{s})\hat{\sigma}_{2};
\end{equation*}
\item if instead $ l=2m-1 $, then
\begin{equation*}
M^{(n)}_{2n-2m+1}=E_{2n-2m+1}(\vec{s})\otimes I_{2r}+F_{2n-2m+1}(\vec{s})\hat{\sigma}_{1}.
\end{equation*}
\end{itemize}
\end{Lemma}
\begin{proof}
The proof is a direct consequence of the symmetry property that the asymptotics coefficients of $ \Xi^{(n)}$, $\Theta^{(n)} $ have. We start with the even case $ l=2m $. The coefficient of the term~$ \lambda^{2n-2m} $ in the matrix $ M^{(n)} $ is given by the following sum:
\begin{equation*}
M^{(n)}_{2n-2m}=\left(\Xi_{2m}^{(n)} \hat{\sigma}_{3}+\hat{\sigma}_{3}\Theta_{2m}^{(n)} + \sum_{j\colon j+k=2m}\Xi_{j}^{(n)}\hat{\sigma}_{3}\left(\sum_{k= 1}^{2m-1}\Theta_{k}^{(n)} \right) \right),
\end{equation*}
where in the last sum all the terms are of type
\[ \Xi_{2s}^{(n)}\hat{\sigma}_{3} \Theta_{2(m-s)}^{(n)} \qquad \text{or}\qquad \Xi_{2s-1}^{(n)}\hat{\sigma}_{3} \Theta_{2(m-s)+1}^{(n)}.\]
Using the symmetries~\eqref{eq:symcoef} and~\eqref{eq:symcoefinv}, a direct computation shows that these terms are always linear combinations of the Pauli's matrices $ \hat{\sigma}_{2}$, $\hat{\sigma}_{3} $.

So we can conclude that
\[M^{(n)}_{2n-2m} =A_{2n-2m}(\vec{s})\hat{\sigma}_{3}+G_{2n-2m}(\vec{s})\hat{\sigma}_{2} .\]
where the functions $ A_{2n-2m}(\vec{s})$, $G_{2n-2m}(\vec{s}) $ depend on the asymptotic coefficients of $ \Xi^{(n)}$, $\Theta^{(n)} $.

We work in the same way for the odd case, $ l=2m-1$. The coefficient of $ \lambda^{2n-2m+1} $ is given by the same formula
\begin{equation*}
M_{2n-2m+1}^{(n)}=\left(\Xi_{2m-1}^{(n)} \hat{\sigma}_{3}+\hat{\sigma}_{3}\Theta_{2m-1}^{(n)} + \sum_{j\colon j+k=2m-1}\Xi_{j}^{(n)}\hat{\sigma}_{3}\left(\sum_{k= 1}^{2m-2}\Theta_{k}^{(n)} \right) \right),
\end{equation*}
where in the last sum there are just terms of the two following types
\[ \Xi_{2s}^{(n)}\hat{\sigma}_{3} \Theta_{2(m-s)-1}^{(n)} \qquad \text{or}\qquad \Xi_{2s-1}^{(n)}\hat{\sigma}_{3} \Theta_{2(m-s)}^{(n)}.\]
In both of the cases, always replacing the symmetries \eqref{eq:symcoef} and \eqref{eq:symcoefinv}, they result to be linear combinations of $ I_{2r}$, $\hat{\sigma}_{1}$. Thus we can finally conclude that
\[ M_{2n-2m+1}^{(n)}=E_{2n-2m+1}(\vec{s})\otimes I_{2r}+F_{2n-2m+1}(\vec{s})\hat{\sigma}_{1}.\tag*{\qed} \]\renewcommand{\qed}{}
\end{proof}

Thanks to this lemma, the form of the matrix $ M^{(n)} $ is exactly the one of the statement and the proposition is completely proved.
\end{proof}

\begin{Remark}The system \eqref{eq:laxpairRH} for $ \Psi^{(n)} $ describes the isomonodromic deformations w.r.t.\ the deformation parameters $ s_{i}$, $i=1,\dots,r $, of the linear differential equation \[\frac{\partial}{\partial \lambda}\Psi ^{(n)}(\lambda, \vec{s})=M^{(n)}(\lambda, \vec{s})\Psi ^{(n)} ( \lambda, \vec{s} ), \]
that has only one irregular singular point at $ \infty $ of Poincar\'e rank $r= 2n+1, $ and in the special case of symmetry
\[ -\hat{\sigma}_{1}M^{(n)}(\lambda, \vec{s})\hat{\sigma}_{1}=M^{(n)}(-\lambda, \vec{s}). \]
\end{Remark}

We can finally state that the system \eqref{eq:laxpairRH} is an isomonodromic Lax pair for the matrix PII hierarchy~\eqref{eq:piinchier}.
\begin{Proposition}
For each fixed $ n $, the compatibility condition of the system \eqref{eq:laxpairRH}, i.e., the equation
\begin{equation}\label{eq:comp}
\frac{\partial}{\partial \lambda} L^{(n)}(\lambda,\vec{s})- \dS M^{(n)}(\lambda,\vec{s})+ \big[L^{(n)}(\lambda,\vec{s}), M^{(n)}(\lambda,\vec{s}) \big]=0
\end{equation}
is equivalent to the following equation
\begin{equation*}
\left( \dS + \acomw \right)\mathcal{L}_{n}\ubr= (-1)^{n+1}4^{n} [ S,W ] _{+},
\end{equation*}
Furthermore, the coefficients of the matrix $ M^{(n)} $ are written in terms of the matrix Lenard operators in the following way
\begin{gather}
A_{2n-2k}(\vec{s})=-\frac{1}{2}\left( -\frac{1}{4}\right)^{k-1}\! \left( \mathcal{L}_{k}\ubr - \left( \dS -\comw \dinv \comw \right)\!\left( \dS +\acomw \right)\mathcal{L}_{k-1} \ubr \right) , \nonumber\\
G_{2n-2k}(\vec{s})=\frac{{\rm i}}{2}\left(-\frac{1}{4} \right) ^{k-1}\left(\left( \dS -\comw \dinv \comw \right)\left( \dS +\acomw \right)\mathcal{L}_{k-1} \ubr \right), \nonumber\\
E_{2n-2k+1}\left( \vec{s}\right)=-{\rm i}\left(-\frac{1}{4} \right)^{k-1}\dinv\left( \comw\left( \acomw+\dS\right) \mathcal{L}_{k-1} \ubr \right) ,\nonumber\\
F_{2n-2k+1}\left( \vec{s}\right)=-{\rm i}\left(-\frac{1}{4} \right)^{k-1}\left(\left( \acomw +\dS\right) \mathcal{L}_{k-1} \ubr\right),\qquad
\text{for} \quad k=1,\dots, n.\label{eq:coeffM}
\end{gather}
In other words the system \eqref{eq:laxpairRH} is a Lax pair for the matrix Painlev\'e II hierarchy \eqref{eq:piinchier}.
\end{Proposition}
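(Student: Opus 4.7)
The plan is to expand the zero-curvature equation \eqref{eq:comp} in powers of $\lambda$ and match coefficients, which will simultaneously produce (i) the explicit formulas \eqref{eq:coeffM} identifying $A_{2n-2k}, G_{2n-2k}, E_{2n-2k+1}, F_{2n-2k+1}$ with the matrix Lenard operators evaluated on $U=\partial_S W - W^2$, and (ii) the $n$-th equation of the hierarchy \eqref{eq:piinchier} as the $\lambda^{0}$ coefficient. Since $L^{(n)}$ is linear in $\lambda$, and $M^{(n)}$ has degree $2n$, the bracket $[L^{(n)},M^{(n)}]$ is of degree $2n+1$, while $\partial_\lambda L^{(n)}=\mathrm{i}\hat\sigma_3$ contributes only at $\lambda^{0}$ and $\partial_S M^{(n)}$ matches the polynomial structure of $M^{(n)}$; thus \eqref{eq:comp} splits into $2n+2$ independent matrix conditions indexed by powers of $\lambda$.

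First I would write $L^{(n)}=\mathrm{i}\lambda\hat\sigma_3+W\hat\sigma_1$ (in block form) and decompose $M^{(n)}$ into its $\hat\sigma_3$, $\hat\sigma_2$, $I_{2r}$, $\hat\sigma_1$ components, and use the Pauli-type commutation relations together with $[W\hat\sigma_1,X\hat\sigma_3]=[W,X]_{+}(-\hat\sigma_1\hat\sigma_3)$, $[W\hat\sigma_1,X\hat\sigma_2]=[W,X]_{-}\hat\sigma_1\hat\sigma_2$, etc., to express $[L^{(n)},M^{(n)}]$ in the same basis. Collecting each $\lambda$-power yields, at the even power $\lambda^{2n-2k}$, relations that tie $E_{2n-2k+1},F_{2n-2k+1}$ to $\partial_S$ and anticommutators/commutators of $A_{2n-2k},G_{2n-2k}$ with $W$, and at the odd power $\lambda^{2n-2k+1}$, relations of the opposite parity. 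Eliminating $E$ and $F$ from adjacent levels produces a single second-order recursion for $A_{2n-2k}$ of the form encoded in the factorization formula \eqref{eq:factlennc}. The integration needed to solve for $E$ is well defined thanks to the locality of $\mathcal{L}_k[U]$ (\cite{olver2000classification}), which accounts for the $\partial_S^{-1}$ in the formula for $E_{2n-2k+1}$.

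Second, I would prove \eqref{eq:coeffM} by induction on $k$. The base case $k=0$ reads $A_{2n}=I_r=2\mathcal{L}_0[U]$ and is imposed by the highest power $\lambda^{2n+1}$ of \eqref{eq:comp}. Assuming the formulas hold up to level $k-1$, the induction step consists in substituting the inductive expressions for $E_{2n-2k+1}$ and $F_{2n-2k+1}$ into the recursion obtained at level $k$ and recognizing
\[
\bigl(\partial_S-[W,\cdot]\partial_S^{-1}[W,\cdot]\bigr)\bigl(\partial_S+[W,\cdot]_{+}\bigr)\mathcal{L}_{k-1}[U]
\]
as precisely the operator appearing in \eqref{eq:factlennc} (up to the outer factor $\partial_S-[W,\cdot]_{+}$). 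Because $U=\partial_S W-W^2$ is the Miura transform, the relation \eqref{eq:factlennc} identifies this combination with $\partial_S\mathcal{L}_k[U]$, confirming both the $A_{2n-2k}$ and $G_{2n-2k}$ formulas with the correct $(-1/4)^{k-1}$ prefactor. The sign and factor arise from the combinatorics of $\mathrm{i}$'s and $2$'s produced by the Pauli commutators $[\hat\sigma_1,\hat\sigma_2]=-2\mathrm{i}\hat\sigma_3$ and $[\hat\sigma_1,\hat\sigma_3]=2\mathrm{i}\hat\sigma_2$ accumulated at each level.

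Finally, the coefficient of $\lambda^{0}$ in \eqref{eq:comp} is the only one that receives a contribution from $\partial_\lambda L^{(n)}=\mathrm{i}\hat\sigma_3$ and from the $\mathrm{i}S$ term of $M^{(n)}$, whose $S$-derivative yields $\mathrm{i}\hat\sigma_3$ (cancelling the $\partial_\lambda L$ contribution) and whose commutator with $W\hat\sigma_1$ produces $[S,W]_{+}$. Plugging in the $k=n$ case of \eqref{eq:coeffM} and using once more the factorization \eqref{eq:factlennc}, this bottom equation collapses exactly to $(\partial_S+[W,\cdot]_{+})\mathcal{L}_n[U]=(-1)^{n+1}4^{n}[S,W]_{+}$, i.e.\ \eqref{eq:piinchier}. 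The conversion between the two directions of the equivalence is clear since all intermediate recursions were solved by closed-form expressions, so no genuine constraint is lost in the elimination.

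The principal obstacle is bookkeeping: correctly tracking signs, powers of $\mathrm{i}$, factors of $2$, and the precise pattern of commutators versus anticommutators through each of the $2n+2$ levels, and verifying that the elimination of $E$ and $F$ reproduces exactly the factorized operator of \eqref{eq:factlennc} rather than a cousin of it. Once the induction step is set up with the correct normalisation, the rest is propagation; the factorization \eqref{eq:factlennc} is the structural miracle that makes the matching to the Lenard hierarchy automatic.
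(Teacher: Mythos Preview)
Your approach is essentially the same as the paper's: both expand the zero-curvature equation \eqref{eq:comp} in powers of $\lambda$, reduce it to a recursive system for the coefficients $A,G,E,F$, and then prove by induction on $k$---using the factorization \eqref{eq:factlennc} at each step---that the formulas \eqref{eq:coeffM} solve this recursion, with the residual $\lambda^0$ equation collapsing to \eqref{eq:piinchier}.

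The one point you gloss over and the paper makes explicit is the choice of integration constants. When solving $\partial_S E_{2n-2k+1}=[W,F_{2n-2k+1}]$ and $\partial_S A_{2n-2k}=-\mathrm{i}[W,G_{2n-2k}]_+$ for $E$ and $A$, the compatibility condition alone does not determine the constants of integration; the paper sets them to zero by invoking the asymptotics of $\Psi^{(n)}$, i.e.\ the Riemann--Hilbert origin of $M^{(n)}$. Without that external input the recursion admits a whole family of solutions (corresponding to adding lower flows $t_l\mathcal{L}_l[U]$), and the equivalence with the \emph{homogeneous} equation \eqref{eq:piinchier} would not hold. Your appeal to locality of $\mathcal{L}_k[U]$ ensures that $\partial_S^{-1}$ lands on a differential polynomial, but it does not by itself select the zero constant---so you should add this justification when you write out the induction.
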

\begin{proof}We first rewrite the compatibility condition \eqref{eq:comp} as the following system of differential equations for the coefficients $ A$, $F$, $G$, $E $:
\begin{gather*}
\dS E(\lambda,\vec{s}) = [ W,F(\lambda,\vec{s}) ],\\
\dS A(\lambda,\vec{s}) =-{\rm i} [W,G(\lambda,\vec{s}) ] _{+},\\
\dS F(\lambda,\vec{s})=-2\lambda G(\lambda,\vec{s}) + [ W,E(\lambda,\vec{s}) ],\\
\dS G(\lambda,\vec{s})=2\lambda F(\lambda,\vec{s}) +{\rm i}[W,A(\lambda,\vec{s}) ] _{+} - [S,W] _{+}.
\end{gather*}
These equations must be satisfied identically in $ \lambda $. Thus, by the polinomiality of the coefficients $ A$, $F$, $G$, $E $, this system is equivalent to the following one
\begin{gather}
\dS E_{2n-2k+1}(\vec{s})= [ W, F_{2n-2k+1}(\vec{s}) ],\nonumber\\
\dS A_{2n}=0,\nonumber\\
\dS A_{2n-2k}(\vec{s})=-{\rm i} [W, G_{2n-2k}(\vec{s}) ] _{+},\nonumber\\
G_{2n-2k}(\vec{s})=\frac{1}{2}\left(-\dS F_{2n-2k+1}(\vec{s})+ [W,E_{2n-2k+1} (\vec{s}) ] \right),\nonumber \\
F_{2n-1}(\vec{s})=-\frac{{\rm i}}{2}\left[ W, A_{2n} \right] _{+},\nonumber\\
F_{2n-2k-1}(\vec{s})=\frac{1}{2}\left( \dS G_{2n-2k}(\vec{s})-{\rm i} [W, A_{2n-2k}(\vec{s}) ] _{+}\right),\nonumber \\
\frac{{\rm i}}{2}\dS G_{0}(\vec{s})=-[S,W]_{+}-\frac{1}{2} [W,A_{0}(\vec{s})] _{+}\qquad \text{for} \quad k=1,\dots, n.\label{eq:difeqcoeff}
\end{gather}
In order to prove the statement, we are going to prove by induction over $ l=2n-j $ that each coefficient $A_{2n-2k}$, $E_{2n-2k+1}$, $G_{2n-2k}$, $F_{2n-2k+1}$ is given by the formulae~\eqref{eq:coeffM} and that this implies that the last equation in the system~\eqref{eq:difeqcoeff} is exactly the $ n$-th member of the PII hierarchy~\eqref{eq:piinchier}.

We first check that for $ l=2n-1,2n-2$ the formulae~\eqref{eq:coeffM} are solutions of the equations~\eqref{eq:difeqcoeff}, i.e., the coefficients $ F_{2n-1}$, $E_{2n-1}$, $G_{2n-2}$, $A_{2n-2}$, are given by these formulae.

 Since $ A_{2n}=I_{r}$, the equation
\[\dS A_{2n}=0\]
is satisfied.
Then, the equation for $ F_{2n-1} $ will be satisfied for
\[F_{2n-1}=-{\rm i}W,\]
that is exactly the result of the formula in \eqref{eq:coeffM} for $ k=1 $, since
\[-{\rm i}\left(-\frac{1}{4} \right)^{0}\left(\left( \acomw +\dS\right) \mathcal{L}_{0} \ubr\right)=-{\rm i}W.\]
As a consequence, the equation for the coefficient $ E_{2n-1} $ in the system \eqref{eq:difeqcoeff} becomes
\[\dS E_{2n-1}(\vec{s})=0,\]
thus $ E_{2n-1} $ is constant w.r.t.\ the variable $S$ and it is in particular $E_{2n-1}=0 , $ because of the asymptotics of $ \Psi^{(n)}$.
This is also what is given by the formula for $ k=1 $:
\[-{\rm i}\left(-\frac{1}{4} \right)^{0}\dinv\left( \comw\left( \acomw+\dS\right) \mathcal{L}_{0} \ubr \right)=0.\]
We can then compute the term $ G_{2n-2} $ for which the equation in~\eqref{eq:difeqcoeff} is now
\[G_{2n-2}=-\frac{1}{2}\dS ( -{\rm i}W ) =\frac{{\rm i}}{2} W_{S},\]
that coincides with the formula
\[\frac{{\rm i}}{2}\left(-\frac{1}{4} \right) ^{0}\left(\left( \dS -\comw \dinv \comw \right)\left( \dS +\acomw \right)\mathcal{L}_{0} \ubr \right)=\frac{{\rm i}}{2}\dS W.\]
Finally, we can compute the term $ A_{2n-2} .$ It is supposed to satisfy, from the system \eqref{eq:difeqcoeff}, the equation
\[\dS A_{2n-2}=-{\rm i} [W, G_{2n-2} ] _{+}=\frac{1}{2} [W, W_{S} ] _{+}.\]
Integrating and taking the constant of integration another time equal $ 0 $ (for the same reason used above) we get
\[ A_{2n-2}= \frac{1}{2}W^{2}.\]
The same that is given by the formula{\samepage
\begin{gather*} -\frac{1}{2}\left( -\frac{1}{4}\right)^{0} \left( \mathcal{L}_{1}\ubr - \left( \dS -\comw \dinv \comw \right)\left( \dS +\acomw \right)\mathcal{L}_{0} \ubr \right)\\
\qquad{} =-\frac{1}{2}\big(W_{S}-W^{2}-W_{S} \big) . \end{gather*}
Thus for $ k=1 $ the formulas in \eqref{eq:coeffM} gives solutions of the system \eqref{eq:difeqcoeff}.}

Now we proceed by induction: supposing that for $l=2n-2k+1 $ the coefficients $ E_{2n-2k+1}$, $F_{2n-2k+1} $ are given by the formulas \eqref{eq:coeffM}, we will find that then also the coefficients for $l=2n-2k $ and $ l=2n-2k-1$ have the form given by the formulas \eqref{eq:coeffM}.

Indeed, from the equations in \eqref{eq:difeqcoeff}, we have
\begin{gather*}
G_{2n-2k}(\vec{s}) =\frac{1}{2}\left(-\dS F_{2n-2k+1}(\vec{s})+ [W,E_{2n-2k+1} (\vec{s}) ] \right) \\
\hphantom{G_{2n-2k}(\vec{s})}{}
=-\frac{1}{2}\left( -{\rm i}\left(-\frac{1}{4} \right)^{k-1}\dS\left(\left( \acomw +\dS\right) \mathcal{L}_{k-1} \ubr\right)\right) \\
\hphantom{G_{2n-2k}(\vec{s})=}{}+ \frac{1}{2}\left( [W,\cdot ] \left( -{\rm i}\left(-\frac{1}{4} \right)^{k-1}\dinv\left( \comw\left( \acomw+\dS\right) \mathcal{L}_{k-1} \ubr \right)\right) \right) \\
\hphantom{G_{2n-2k}(\vec{s})}{} =\frac{{\rm i}}{2}\left(-\frac{1}{4} \right)^{k-1}\left(\left( \dS -\comw \dinv \comw \right)\left( \dS +\acomw \right)\mathcal{L}_{k-1} \ubr \right)
\end{gather*}
that is exactly the formula in \eqref{eq:coeffM} for this coefficient.
Then we can compute
\begin{gather*}
A_{2n-2k}(\vec{s}) =-{\rm i}\dinv [W, G_{2n-2k}(\vec{s}) ] _{+} = \frac{1}{2}\left(-\frac{1}{4} \right)^{k-1}\dinv \acomw\\
 \hphantom{A_{2n-2k}(\vec{s})=}{}
\times \left(\left( \dS -\comw \dinv \comw \right)\left( \dS +\acomw \right)\mathcal{L}_{k-1} \ubr \right)\\
\hphantom{A_{2n-2k}(\vec{s})}{}
 = -\frac{1}{2}\left(-\frac{1}{4} \right)^{k-1}\!\left( \mathcal{L}_{k}\ubr - \left( \dS -\comw \dinv \comw \right)\!\left( \dS +\acomw \right)\mathcal{L}_{k-1} \ubr \right),
\end{gather*}
where in the last passage we have integrated (taking the integration's constant 0) after having applied formula \eqref{eq:factlennc}.
Then the equation for $ F_{2n-2k-1}(\vec{s})$ reads as
\begin{gather*}
F_{2n-2k-1} =\frac{1}{2}\left( \dS G_{2n-2k}(\vec{s})-{\rm i}\left[W, A_{2n-2k}(\vec{s}) \right] _{+}\right)\\
\hphantom{F_{2n-2k-1}}{}
 =\frac{1}{2}\left( \dS\frac{{\rm i}}{2}\left(-\frac{1}{4} \right)^{k-1}\left(\left( \dS -\comw \dinv \comw \right)\left( \dS +\acomw \right)\mathcal{L}_{k-1} \ubr \right) \right) \\
\hphantom{F_{2n-2k-1}=}{}
 -\frac{{\rm i}}{2}\Bigg(\acomw \frac{1}{2}\left(-\frac{1}{4} \right)^{k-1}\Bigg( \mathcal{L}_{k}\ubr - \left( \dS -\comw \dinv \comw \right)\\
\hphantom{F_{2n-2k-1}=}{}\times
 \left( \dS +\acomw \right)\mathcal{L}_{k-1} \ubr \Bigg)\Bigg) =-{\rm i}\left(-\frac{1}{4} \right)^{k}\left( \dS +\acomw \right)\mathcal{L}_{k}\ubr,
\end{gather*}
where in the last line we used another time property \eqref{eq:factlennc} of the matrix Lenard operators. Finally, the formula for $ E_{2n-2k-1} $ directly follows from the equation above and taking the integration contant equal~0, while integrating the equation~\eqref{eq:difeqcoeff}.

In the end, when we replace the formulas for $ G_{0}$, $A_{0} $ in the last equation of the system \eqref{eq:difeqcoeff}, namely
\[ \frac{{\rm i}}{2}\dS G_{0}(\vec{s})=-[S,W]_{+}-\frac{1}{2} [W,A_{0}(\vec{s}) ] _{+}, \]
using another time the property \eqref{eq:factlennc} we get the $n$-th member of the Painlev\'e II hierarchy:
\[ \left(\acomw + \dS \right) \mathcal{L}_{n} \ubr= ( -1 ) ^{n+1}4^{n} [S,W]_{+} .\tag*{\qed} \]\renewcommand{\qed}{}
\end{proof}

\begin{Remark}
The matrices $ L^{(n)}$, $M^{(n)} $ obtained here, are the analogue of the Lax pair for the scalar Painlev\'e II hierarchy obtained in \cite{clarkson2006lax}, with $W(\vec{s})$ given by
\begin{equation*}
2\beta_{1}^{(n)}(\vec{s})= -2{\rm i}\lim_{| \lambda | \rightarrow \infty}\big( \lambda\Xi^{(n)}(\vec{s})\big)_{1,2}:=W(\vec{s}).
\end{equation*}
\end{Remark}

We can then state and prove the final result of this study, that links solutions of the homogeneus matrix Painlev\'e~II hierarchy~\eqref{eq:piinchier} to Fredholm determinants of the matrix Airy convolution operators.
\begin{Corollary}\label{Corollary:solpiinc}
There exists a solution $W$ of the $ n$-th member of the PII hierarchy \eqref{eq:piinchier} connected to Fredholm determinant of the $n$-th Airy matrix operator \eqref{eq:freddet} through the following formula
\begin{equation}\label{eq: fred det and PII sol}
-\Tr\big( W^{2} (\vec{s})\big) = \dSS \ln \big( F^{(n)} (s_{1},\dots,s_{r} ) \big) .
\end{equation}
This solution $ W$ has boundary behavior $ ( W ) _{k,l=1}^r \sim -2 ( c_{kl}\mathrm{Ai}_{2n+1} (s_k + s_l) ) _{k,l=1}^r $ in the regime $\mathrm{s}\rightarrow + \infty$ with $ | \delta_j | \leq m $ for every $ j $, where $\mathrm{s} \coloneqq \frac{1}{r}\sum_{j=1}^{r}s_j$ is the baricenter of the variables $s_j$, and $\delta_j\coloneqq s_j - \mathrm{s}$.
\end{Corollary}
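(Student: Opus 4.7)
My plan is to deduce the identity from Theorem~\ref{Proposition:ascoeffredolm} together with the explicit Lax pair of Proposition~\ref{Proposition:laxpairintro}, and to extract the asymptotic behavior of $W$ from a small-norm analysis of Riemann--Hilbert Problem~\ref{pb:rhp}. That the function $W$ produced this way is indeed a solution of the $n$-th member of \eqref{eq:piinchier} is built in: by construction $\Psi^{(n)}$ solves the Lax pair \eqref{eq:laxpairRH}, whose compatibility condition was already shown to be exactly the $n$-th noncommutative Painlev\'e~II equation. Existence of $\Xi^{(n)}$ in the regime of interest is guaranteed by Theorem~\ref{Theorem:existence solution rhp}.

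For the logarithmic-derivative identity, starting from $\dS \ln F^{(n)} = -2{\rm i}\Tr\bigl(\alpha_1^{(n)}\bigr)$ given by Theorem~\ref{Proposition:ascoeffredolm}, I would apply $\dS$ once more, so that the task reduces to computing $\dS \alpha_1^{(n)}$ in closed form. This follows by rewriting $\dS \Psi^{(n)} = L^{(n)}\Psi^{(n)}$ in terms of $\Xi^{(n)}$ as $\dS \Xi^{(n)} = L^{(n)}\Xi^{(n)} - {\rm i}\lambda\, \Xi^{(n)}\hat{\sigma}_3$, expanding both sides in powers of $\lambda^{-1}$, and reading off the coefficient of $\lambda^{-1}$, which gives
\[ \dS \Xi_1^{(n)} = {\rm i}\bigl[\hat{\sigma}_3,\Xi_2^{(n)}\bigr] + (W\otimes\sigma_1)\,\Xi_1^{(n)} . \]
By the symmetry~\eqref{eq:symcoef}, $\Xi_2^{(n)} = \alpha_2^{(n)}\otimes I_2 + \beta_2^{(n)}\otimes\sigma_1$, which forces the $(1,1)$-block of $[\hat{\sigma}_3,\Xi_2^{(n)}]$ to vanish; the remaining $(1,1)$-block of $(W\otimes\sigma_1)\Xi_1^{(n)}$ is proportional to $W\beta_1^{(n)}$, and substituting $W = 2\beta_1^{(n)}$ yields an expression of the form $\dS \alpha_1^{(n)} \propto W^2$, with the correct factor of ${\rm i}$ fixed by the $\lambda^0$-level identity $W\otimes\sigma_1 = {\rm i}[\Xi_1^{(n)},\hat{\sigma}_3]$. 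Taking the trace and multiplying by $-2{\rm i}$ gives precisely $\dSS \ln F^{(n)} = -\Tr\bigl(W^2\bigr)$.

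For the asymptotic $W \sim -2 (c_{kl}\mathrm{Ai}_{2n+1}(s_k+s_l))_{k,l}$, I would analyze Riemann--Hilbert Problem~\ref{pb:rhp} in the regime $\mathrm{s}\to+\infty$ with $|\delta_j|\leq m$. The key observation is that the entries of $r^{(n)}(\lambda)$ contain an exponential factor $\exp({\rm i}\lambda(s_k+s_l))$, and the contour $\gamma_+^n$ is chosen so that $\Im \lambda \to +\infty$ at its endpoints; with $s_k+s_l \geq 2(\mathrm{s}-m)$, this factor decays uniformly along $\gamma_+^n$, and analogously $r^{(n)}(-\lambda)$ decays along $\gamma_-^n$. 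Hence $J^{(n)} - I_{2r}$ is uniformly exponentially small on the contour, and a standard Neumann-series argument for the associated singular integral equation gives
\[ \Xi^{(n)}(\lambda) \sim I_{2r} + \frac{1}{2\pi{\rm i}} \int_{\gamma_+^n\cup\gamma_-^n} \frac{J^{(n)}(\zeta)-I_{2r}}{\zeta-\lambda} \,{\rm d}\zeta . \]
Extracting the $1/\lambda$ coefficient gives $\Xi_1^{(n)}$ as an integral of $J^{(n)}-I_{2r}$ along the contour; reading off the $(1,2)$-block and recognizing the Airy integral representation~\eqref{eq:int representation airy} produces $\beta_1^{(n)} \sim -(c_{kl}\mathrm{Ai}_{2n+1}(s_k+s_l))_{k,l}$, and doubling yields the stated asymptotic for $W$.

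The main obstacle, in my view, is not the algebraic derivation of the trace identity -- which is essentially bookkeeping once the Lax equation is in hand -- but rather the uniform small-norm analysis behind the asymptotic statement. The contour $\gamma_+^n\cup\gamma_-^n$ is unbounded, and one must control the decay of the jump uniformly in $\vec{s}$ under the constraint $|\delta_j|\leq m$, so that the Neumann series converges uniformly and the first-order approximation of $\beta_1^{(n)}$ survives integration without absorbing spurious contributions. Once that uniform control is in place, the rest of the corollary follows by direct substitution.
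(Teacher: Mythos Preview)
Your proposal is correct and follows essentially the same route as the paper. For the trace identity, both you and the paper extract the relation $\dS\alpha_1^{(n)}=-2{\rm i}\bigl(\beta_1^{(n)}\bigr)^2$ from the vanishing of the $\lambda^{-1}$ coefficient of the Lax equation; the paper organizes this as the $\lambda^{-1}$ term of $\dS\Psi^{(n)}\bigl(\Psi^{(n)}\bigr)^{-1}$ using the relations $\Theta_1^{(n)}=-\Xi_1^{(n)}$ and $\Theta_2^{(n)}=\bigl(\Xi_1^{(n)}\bigr)^2-\Xi_2^{(n)}$, while you read it off from $\dS\Xi^{(n)}=L^{(n)}\Xi^{(n)}-{\rm i}\lambda\,\Xi^{(n)}\hat\sigma_3$ together with the symmetry~\eqref{eq:symcoef}. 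These are the same computation in different packaging.

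For the asymptotics, the strategies again coincide---show the jump is close to the identity, invoke a small-norm argument, and recover $W$ from the integral representation of $\Xi_1^{(n)}$---but the paper carries out the small-norm step differently. Rather than arguing directly that $r^{(n)}(\lambda)$ decays on a fixed $\gamma_+^n$ because $\Im\lambda>0$ and $s_k+s_l\to+\infty$, the paper rescales $\lambda=z\,\mathrm{s}^{1/2n}$, factorizes the jump into elementary pieces $F_{kl}^{\pm}$, locates the $2n$ critical points of each phase, passes the rescaled contours through them, and evaluates the sup of $|F_{kl}^{\pm}-I_{2r}|$ at those saddle points to obtain decay like $\exp\bigl(-C\,\mathrm{s}^{(2n+1)/2n}\bigr)$ uniformly in the $\delta_j$. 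This steepest-descent step is precisely what closes the gap you flag as the main obstacle: your direct decay argument needs a contour with $\Im\lambda$ bounded below, and the rescaling plus saddle-point placement is the clean way to produce such a contour with uniform control. Once that is in place, both approaches finish identically by reading off $\bigl(\Xi_1^{(n)}\bigr)_{1,2}$ from the leading Neumann term and recognizing the integral representation~\eqref{eq:int representation airy}.
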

\begin{proof}
We first prove the formula \eqref{eq: fred det and PII sol}. The statement is achieved by Theorem~\ref{Proposition:ascoeffredolm} and the relation between $ \alpha_{1}^{(n)}$, $\beta_{1}^{(n)} $ given by
\begin{equation}\label{eq:relaab}
\dS \alpha_{1}^{(n)}=-2{\rm i}\big( \beta_{1}^{(n)}\big)^{2} .
\end{equation}
This relation holds for each $ n $ and it is obtained by looking at the coefficient of the term $ \lambda^{-1} $ in the asymptotic expansion at $ \infty $ of
\[\dS \Psi^{(n)} \big( \Psi^{(n)}\big) ^{-1},\]
and recalling that it must be 0. Indeed, from the asymptotic expansion of $ \Psi^{(n)} $ we have that the power $\lambda^{-1}$ coming from the formal asymptotic expansion of $\dS \Psi^{(n)} \big( \Psi^{(n)}\big) ^{-1}$ is\footnote{Here the notation $ [ \; \; ]_{-1} $ indicates that we only take the term $\lambda^{-1}$ in the relevant formal series.}
\begin{align*} \left[ \dS \Psi^{(n)}\big(\Psi^{(n)} \big) ^{-1}\right] _{-1} & =\left[ \left(I_{2r } + \sum_{j \geq 1 } \frac{\Xi^{(n)}_{j}}{\lambda^{j}} \right) {\rm i}\lambda \hat{\sigma}_{3}\left( I_{2r } + \sum_{j \geq 1 } \frac{\Theta^{(n)}_{j}}{\lambda^{j}} \right)\right] _{-1}\\
& = \frac{{\rm i}}{\lambda}\left( \Xi_{2}^{(n)}\hat{\sigma}_{3}+\hat{\sigma}_{3}\Theta_{2}^{(n)}+ \Xi_{1}^{(n)}\sigma_{3}\Theta_{1}^{(n)}+\dS \Xi_{1}\right) .
\end{align*}
And replacing in the coefficient of $ \lambda^{-1} $ the relations between the asymptotic coefficients of $\Theta^{(n)}$ and the ones of~$\Xi^{(n)}$, namely
\begin{gather*}
\Theta_{1}^{(n)}=-\Xi_{1}^{(n)} ,\qquad
\Theta_{2}^{(n)}=\big( \Xi_{1}^{(n)}\big) ^{2} -\Xi_{2}^{(n)}
\end{gather*}
the result is exactly the relation \eqref{eq:relaab}.

Now we are going to prove the second part of the statement. We define the scalar variables $\mathrm{s} \coloneqq \frac{1}{r}\sum_{j=1}^{r}s_j$ and $\delta_j\coloneqq s_j - \mathrm{s} $ for any $j=1,\dots, r$.

 We are now going to study the behavior of the solution $W$ for
\begin{equation}\label{eq:asimp. regime}
\mathrm{s}\rightarrow +\infty \qquad \text{and}\qquad | \delta_j | \leq m \quad \forall \, j.
\end{equation}
First, we rewrite the jump matrix $J^{(n)}(\lambda,\vec{s})$ of Riemann--Hilbert Problem~\ref{pb:rhp} in terms of the rescaled complex parameter $z\mathrm{s}^{\frac{1}{2n}}=\lambda $.

In particular we obtain that the jump matrices along $\gamma_+^n$ and along $\gamma_-^n$, are factorized in a product of commuting matrices, written in terms of the rescaled parameter~$z$ and the variables~$\mathrm{s}$,~$\delta_j$. Namely,
\begin{gather}
I_{2r}-2\pi {\rm i} r^{(n)}\big({\pm} z\mathrm{s}^{\frac{1}{2n}}\big)\otimes \sigma_{\pm}\chi_{\gamma_{\pm}^{n}}\big(z\mathrm{s}^{\frac{1}{2n}}\big) \nonumber\\
\qquad{} = \prod_{k,l=1}^{r}\left( I_{2r} + c_{kl} {\rm e}^{ \pm {\rm i}\mathrm{s}^{\frac{2n+1}{2n}}\left( \frac{z^{2n+1}}{2n+1}+z\left(2+\frac{\delta_k + \delta_l}{\mathrm{s}} \right) \right) } E_{k,l}\otimes \sigma_{\pm} \chi_{\tilde{\gamma}_{\pm}^n}(z)\right),\label{eq:factorization}
\end{gather}
where $E_{k,l}$ are the elementary matrices and $\sigma_{+}=\left(\begin{smallmatrix}
0&1\\0&0
\end{smallmatrix}\right)$, $\sigma_{-}=\left(\begin{smallmatrix}
0&0\\1&0
\end{smallmatrix}\right)$ and $\tilde{\gamma}^n_{\pm}$ are the transformed contours under the scaling $\lambda=z\mathrm{s}^{\frac{1}{2n}}$.

 Now, we are going to show that each matrix in the factorization \eqref{eq:factorization}, that we denote by $F_{kl}^{\pm}$, is close to the identity matrix $I_{2r}$ in the regime fixed in \eqref{eq:asimp. regime}. Remark that every $ F_{kl}^{\pm}$ has $2n$ critical points, corresponding to
\begin{equation*}
z_0^h = d_{kl}^{\frac{1}{2n}}{\rm e}^{{\rm i}\frac{\pi}{2n}(2h+1)} , \qquad h=0,\dots,2n-1,
\end{equation*}
where $d_{kl} = 2+\frac{\delta_k + \delta_l}{\mathrm{s}}$ is real, positive and bounded, while looking at the regime \eqref{eq:asimp. regime}.

We can then split the curves $\tilde{\gamma}_{\pm}^n$ respectively in the curves $\tilde{ \gamma}_{\pm,kl}^{n}$ one for each factor $F_{kl}^{\pm}$ appearing in the factorization \eqref{eq:factorization}. The curves $\tilde{ \gamma}_{\pm,kl}^{n}$ pass respectively through the points $ z_0^h$ with $h=0, \dots, n-1$ (in the upper plane) and $h=n,\dots,2n-1$ (in the lower plane).

 In this way, we can then evaluate the $\infty$-norm of each term $F_{kl}^{\pm}-I_{2r}$ and we have
\begin{equation*}
\big|\big| F_{kl}^\pm - I_{2r}\big|\big| _{\infty} = |c_{kl} | \sup_{z \in \tilde{ \gamma}_{\pm, kl}^n} {\rm e}^{\mp \mathrm{s}^\frac{2n+1}{2n} \Im \left( \frac{z^{2n+1}}{2n+1}+zd_{kl} \right)} = |c_{kl} |{\rm e}^{\mp\frac{2n}{2n+1}(\mathrm{s}d_{kl})^{\frac{2n+1}{2n} } \sin \left(\pm\frac{\pi}{2n}\right)}\rightarrow 0
\end{equation*}
for $\mathrm{s}\rightarrow+\infty$ and $|\delta_j| \leq m$ $\forall \, j$.

We can conclude that the rescaled jump matrix itself $J^{(n)}\big(\mathrm{s}^{\frac{1}{2n}} z\big)$ is close to the identity matrix in the regime~\eqref{eq:asimp. regime}, since each factor $F_{kl}^{\pm}$ in its factorization shares this property.

Consider now the rescaled function $X^{(n)}(z) \coloneqq \Xi^{(n)} \big(z\mathrm{s}^{\frac{1}{2n}}\big)$. By using Riemann--Hilbert Problem~\ref{pb:rhp} solved by $\Xi^{(n)}$, we have that
\begin{itemize}\itemsep=0pt
\item $X^{(n)}$ is analytic on $\mathbb{C}\setminus \tilde{ \gamma}_{+}^n\cup\tilde{ \gamma}_{-}^n$ and it admits continuous extension to these curves from either side;
\item its boundary values $X_{\pm}(z)$ while approaching $\tilde{ \gamma}_{+}^n\cup\tilde{ \gamma}_{-}^n$ from the left and respectively from the right, are related through the jump condition~\eqref{eq:jump} but with the rescaled jump matrix computed in~\eqref{eq:factorization};
\item for $|z| \rightarrow +\infty $ we have $X^{(n)} \sim I_{2r} + \sum_{j \geq 1 }\frac{X_j^{(n)}}{z^j}$.
\end{itemize}
Remark that we have $X_1^{(n)}=\mathrm{s}^{-\frac{1}{2n}}\Xi_1^{(n)}$.

By applying the small norm theorem (see for instance Theorem 1.5.1 in \cite{its2011large}), we conclude that the function $X^{(n)}(z)$ behaves as
\begin{equation}\label{eq:smallnormTheorem}
X^{(n)} (z ) = I_{2r} + \mathcal{O} \big( z^{-1} {\rm e}^{-C \mathrm{s}^{\frac{2n+1}{2n}}}\big),\qquad \mathrm{s}\rightarrow+\infty, \quad |\delta_j | \leq m \quad \forall \, j,
\end{equation}
for a certain value $C>0$.

Now, using the integral formula~\cite{its1990differential} for the rescaled solution of the Riemann--Hilbert Problem~\ref{pb:rhp}, namely $X^{(n)}(z)$, we have that
\begin{equation*}
X^{(n)}(z) = I_{2r}-\int_{\tilde{ \gamma}^{n}_{+}}\frac{ X^{(n)}_-(w ) r^{(n)}\big(w\mathrm{s}^{\frac{1}{2n}}\big)\otimes \sigma_+}{w-z} {\rm d}w -\int_{\tilde{ \gamma}^{n}_{-}}
\frac{X^{(n)}_-(w) r^{(n)}\big({-}w\mathrm{s}^{\frac{1}{2n}}\big)\otimes \sigma_-}{w-z} {\rm d}w,
\end{equation*}
and thus we recover the following expression for the first asymptotic coefficient
\begin{equation*}
\big( X_1^{(n)}\big)_{1,2} = \int_{\tilde{ \gamma}^{n}_{+}} X^{(n)}_-(w ) r^{(n)}\big(w\mathrm{s}^{\frac{1}{2n}}\big) {\rm d}w .
\end{equation*}
Finally, by recalling the definition of $W$ and using \eqref{eq:smallnormTheorem} we conclude that
\begin{equation*}
W = -2{\rm i}\big( \Xi_1^{(n)}\big)_{1,2}=- 2{\rm i}\mathrm{s}^{\frac{1}{2n}}\int_{\tilde{ \gamma}^{n}_{+}} X^{(n)}_-(w ) r^{(n)}\big(w\mathrm{s}^{\frac{1}{2n}}\big) {\rm d}w \sim -2 (c_{kl} \mathrm{Ai}_{2n+1}(s_k+s_l) ) _{k,l=1} ^r,
\end{equation*}
in the regime \eqref{eq:asimp. regime}.
\end{proof}
\begin{Remark}
Relation \eqref{eq: fred det and PII sol} can be thought as the noncommutative analogue of the results proved in \cite{tracy1994level} for the Painlev\'e II equation and in \cite{cafasso2019fredholm,le2018multicritical} for the scalar Painlev\'e II hierarchy, connecting the theory of Painlev\'e trascendents to the determinantal point processes theory. For the noncommutative Painlev\'e~II equation~\eqref{eq:ncpii}, this link was already established in~\cite{bertola2012fredholm} and here we actually extended that result to the noncommutative hierarchy~\eqref{eq:piinchier}.
\end{Remark}

\subsubsection*{Acknowledgements}
This project was supported by the European Union Horizon 2020 research and innovation program under the Marie Sklodowska-Curie RISE 2017 grant agreement no.~778010 IPaDEGAN. I~am also grateful to my supervisors Mattia Cafasso and Marco Bertola for their useful advises and guidance.

\pdfbookmark[1]{References}{ref}
\LastPageEnding

\end{document}